\documentclass[conference]{IEEEtran}
\IEEEoverridecommandlockouts

\usepackage{cite}
\usepackage{amsmath,amssymb,amsfonts}
\usepackage{algorithmic}
\usepackage{graphicx}
\usepackage{textcomp}
\usepackage{xcolor}
\usepackage{algorithm}
\usepackage{amsthm}
\usepackage{bm}
\usepackage{url}

\newtheorem{theorem}{Theorem}
\newtheorem{assumption}{Assumption}
\newtheorem{lemma}{Lemma}
\newtheorem{corollary}{Corollary}

\begin{document}

\title{Optimizing Information Freshness for Wireless Local Area Networks with Multiple APs}

\author{\IEEEauthorblockN{Ananth Ram Rajagopalan\textsuperscript{*}}
\IEEEauthorblockA{\textit{Purdue University}\\
West Lafayette, Indiana, USA \\
rajago19@purdue.edu}
\and
\IEEEauthorblockN{Jiahui Ni\textsuperscript{*}}
\IEEEauthorblockA{\textit{Purdue University}\\
West Lafayette, Indiana, USA \\
ni169@purdue.edu}
\and
\IEEEauthorblockN{Vishrant Tripathi}
\IEEEauthorblockA{\textit{Purdue University}\\
West Lafayette, Indiana, USA \\
tripathv@purdue.edu}
\thanks{\textsuperscript{*}Both authors contributed equally to this research.}
}

\maketitle

\begin{abstract}
Dense indoor WLANs increasingly rely on multiple access points (APs) operating over partially overlapping spectrum to support latency-sensitive applications. In such deployments, simultaneous transmissions across APs create co-channel and adjacent-channel interference, making scheduling decisions interdependent and directly impacting information freshness. Motivated by emerging software-defined WLAN architectures that enable centralized coordination, we study the problem of minimizing network-wide Age of Information (AoI) in multi-AP WLANs. Unlike classical AoI scheduling that runs at a single AP, each scheduling decision is now coupled across APs due to interference. This leads to a new class of combinatorial AoI control problems with action-dependent time evolution. We first derive a lower bound on the achievable AoI under arbitrary scheduling policies. We then design stationary randomized policies that have constant-factor optimality guarantees relative to this bound. Building on these insights, we develop a Lyapunov drift-based online policy for systems with action-dependent frame lengths, and establish constant-factor guarantees using new ratio-based drift analysis. To enable scalable implementation, we further show that per-frame scheduling admits  efficient polynomial-time local-search approximations under a submodularity assumption. Simulations using realistic WLAN layouts demonstrate about 50\% AoI reduction over distributed single AP baselines.
\end{abstract}

\begin{IEEEkeywords}
Age-of-Information, Wireless Networks, Scheduling, Optimization
\end{IEEEkeywords}

\section{Introduction}

Modern indoor wireless local area networks (WLANs) increasingly rely on dense deployments of WiFi access points (APs) to provide ubiquitous high-throughput connectivity. Such environments are common in enterprise buildings, factories, warehouses, airports, and smart retail spaces, where multiple APs operate over partially overlapping spectrum. Beyond traditional best-effort traffic, these networks are now expected to support latency-sensitive real-time applications such as industrial automation\cite{nokia_warehouse_automation_private_wireless, warehouse}, connected multi-robot swarms \cite{uav1,uav2}, immersive augmented/virtual reality \cite{braud2017future,jansen2023can}, and large-scale IoT monitoring \cite{iot1,iot2}. In these applications, timely state updates are critical for control, coordination, and user interactivity.

Despite dense infrastructure deployment, wireless coordination in current WLANs remains largely decentralized. Existing network planning methods focus primarily on AP placement and channel assignment, while medium access is delegated to distributed CSMA/CA mechanisms in IEEE 802.11. However, simultaneous transmissions across neighboring APs often generate co-channel and adjacent-channel interference, while physical obstructions such as walls, shelving units, or metallic racks further aggravate hidden-node effects. As a result, independent contention-based medium access can lead to collisions, large delays, and inefficient spectrum usage, limiting support for real-time traffic.

To quantify timeliness, the Age of Information (AoI) has emerged as a standard performance metric. AoI measures the time elapsed since the freshest successfully received update was generated \cite{UpdateOrWait}. Prior work has looked at AoI optimization in a variety of  queueing systems and wireless network settings \cite{queue1,queue2,aoi1,aoi3,whittle,interference,LargeAndSmall}. However, most existing work on AoI assumes a single centralized scheduler and simplified (binary) interference constraints. They do not capture typical WLAN deployments where multiple APs simultaneously serve users. For example, adding a user to the active set might reduce the transmission rate for both users but might still be a net gain for network AoI. Coordination among APs can thus lead to performance gains. However, achieving coordinated scheduling efficiently and optimizing AoI in settings where the set of scheduled transmissions jointly determines both update deliveries and completion times is an open question.

This paper studies \textbf{network-wide AoI minimization in multi-AP WLANs with centralized coordination}. Motivated by the emergence of software-defined networking (SDN) and its applications to wireless scenarios \cite{sdn1,sdn2,sdnwlan,sdnwlan2,mininet-wifi}, we consider a controller that observes network state and schedules transmissions across multiple APs. We assume AP locations and channel assignments are fixed inputs determined by existing planning tools, and focus exclusively on the scheduling problem. Unlike classical slotted AoI formulations where packet deliveries always take a fixed amount of time, each scheduling action in our system determines both \emph{which updates are delivered} and \emph{how much time elapses}, since scheduling frame lengths become a function of SINR-dependent transmission rates. This creates a new class of AoI problems with action-dependent time evolution and combinatorial coupling across APs.

\textbf{Related Work:} There has been recent interest in the WLAN research literature on coordinated operation across multiple APs to improve performance. Surveys such as \cite{verma2023survey} identify coordinated scheduling, interference management, and joint transmission as key directions for next-generation WiFi. Recent works further demonstrate \cite{nunez2025enabling, imputato2024beyond, mediatek_anti_interference_wifi_2025} throughput and spatial reuse gains through multi-AP joint scheduling and interference management in Wi-Fi 8 and beyond, while standardization-oriented works for \textit{IEEE 802.11be} similarly emphasize AP coordination and infrastructure-assisted scheduling \cite{yang2019ap}. Along parallel lines, recent works have started demonstrating the practical feasibility of software-defined WLAN architectures  \cite{sdn1,sdn2,sdnwlan,sdnwlan2,mininet-wifi,vatsa2025optimum} at small scale. As discussed above, there is plenty of work in AoI literature on optimization for wireless networks \cite{aoi1,aoi3,whittle,interference,LargeAndSmall}, including scheduling under general binary interference constraints \cite{interference}. See \cite{yates2021age,sun2022age} for excellent surveys on AoI.

Crucially, prior multi-AP WLAN works focus primarily on throughput and spatial reuse, while existing AoI works typically assume a single scheduler or fixed slot durations and simplified interference models. Closest to our work, the authors of \cite{interference} study optimization under binary interference constraints. However, they restrict analysis to a limited class of policies (stationary randomized) and all packets take exactly one time-slot for delivery. In this work, we study freshness-optimal coordinated scheduling in dense multi-AP WLANs with SINR-dependent general interference constraints, where concurrent transmissions may succeed at reduced rates and each scheduling decision jointly determines both service outcomes and frame durations. Our \textbf{key contributions} are summarized below.
\begin{itemize}
    \item \textbf{Fundamental Limits:} We derive a lower bound on achievable weighted sum AoI under arbitrary scheduling policies in multi-AP WLANs.
    
    \item \textbf{Stationary Randomized Policies:} We design stationary randomized scheduling policies, obtain closed-form expressions for the resulting weighted sum AoI, and prove constant-factor optimality guarantees relative to the lower bound.
    
    \item \textbf{Lyapunov Policy with Action-Dependent Frame Lengths:} We develop a Lyapunov drift-based policy for systems where frame durations depend on the chosen scheduling actions. We introduce a new ratio-based drift analysis and establish constant-factor performance guarantees for the resulting online policy.
    
    \item \textbf{Scalable Optimization via Submodularity:} Under a submodularity assumption, we show that per-frame max-weight scheduling can be implemented efficiently in polynomial time via local-search approximations instead of exhaustive combinatorial search.
    
    \item \textbf{Evaluation:} Simulations using realistic WLAN layouts demonstrate about \textbf{$\mathbf{50}$\% AoI reductions} over decentralized per-AP scheduling baselines.
\end{itemize}

While we focus on multi-AP coordination in WLANs, the mathematical framework we develop can also be applied to scheduling in cellular networks to mitigate inter-cell interference and handle non-orthogonal multiple access (NOMA). The remainder of this paper is organized as follows. In Section~\ref{sec:model}, we present our system model and the multi-AP AoI optimization problem. In Section~\ref{sec:lower_bound}, we derive a lower bound on achievable weighted sum AoI. In Section~\ref{section:srp}, we derive optimal stationary randomized policies and prove performance bounds. In Section~\ref{section:mw}, we develop a novel Lyapunov scheduling policy and a scalable implementation via submodularity. Section~\ref{sec:simulations} contains our simulation results.

\section{System Model}
\label{sec:model}
\begin{figure}[t]
    \centering
    \includegraphics[width=0.5\textwidth]{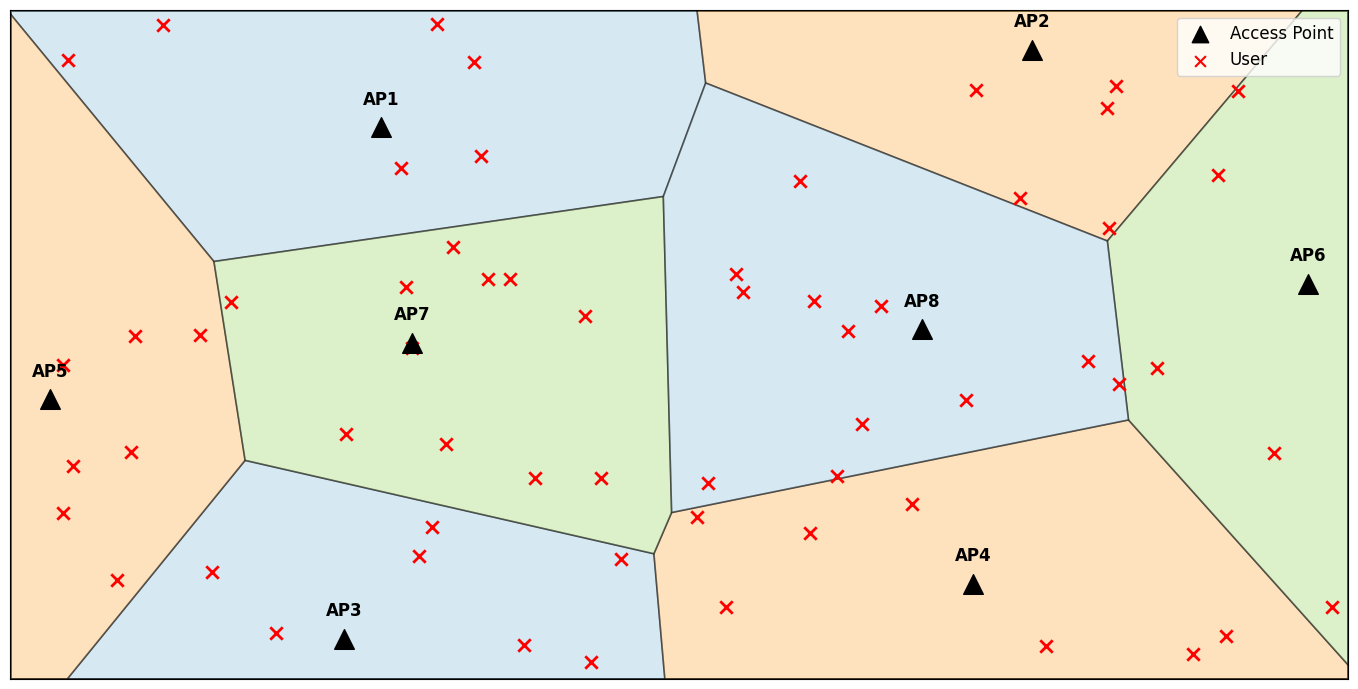}
    \caption{A WLAN with $K=8$ APs and $N=60$ users. There are three orthogonal channels available for use, so APs on the same channel, for example (1,3,8), have interfering transmissions. \vspace{-3mm}}
    \label{fig:ap_fig}
\end{figure}

We consider an indoor wireless local area network (WLAN) consisting of $K$ access points (APs) serving $N$ wireless nodes, as illustrated in Fig.~\ref{fig:ap_fig}. Each node associates with the geographically closest AP. This induces a partition of nodes into sets $\mathcal{N}_k$, where $\mathcal{N}_k$ denotes the nodes served by AP $k$. The AP locations and channel assignments are fixed and assumed to be known to the scheduler.

In conventional WLANs, nodes access the medium using distributed MAC protocols such as CSMA/CA. While effective within a single cell, these mechanisms do not coordinate transmissions across multiple APs. Modern enterprise and industrial WLANs promise centralized controllers via software defined networking (SDN) that are capable of coordinating transmissions network-wide. Our goal is to design such a centralized scheduling policy to optimize information freshness across all nodes.

\subsection{Transmission Times}

We assume that time is slotted and that each node can generate status updates of fixed size ($L$ bits) at will. Due to local bandwidth and interference constraints, at most one node associated with each AP may transmit at any given time. Let $S(t)$ denote the set of nodes transmitting during slot $t$. Then $S(t)$ satisfies
\begin{equation}
|S(t)\cap \mathcal{N}_k| \le 1, \quad \forall k=1,\dots,K.
\end{equation}

Although only one node per AP transmits simultaneously, concurrent transmissions across different APs may interfere depending on spatial layout and channel reuse.

A key feature of practical WLAN deployments is limited spectrum availability. Neighboring APs can often operate on partially overlapping frequency channels rather than fully orthogonal ones. Consequently, simultaneous transmissions across APs influence each other's achievable data rates. If fully orthogonal channels were available, scheduling decisions would decompose into $K$ independent single-AP problems, which are well studied in literature. In contrast, spectrum overlap creates coupling across APs, making coordinated scheduling essential.

To capture these effects, we model the transmission duration of node $i$ as a function of the transmission set, i.e. $\tau_i(S(t)) \in \mathbb{R}^+$. Here, $\tau_i(S(t))$ represents the time required for node $i$ to complete transmission when the active set is $S(t)$ and $i\in S(t)$. This abstraction incorporates interference, channel assignments, node and AP locations, propagation effects, and environmental geometry.

\paragraph{Example (SINR-based model).}
As an illustrative instance, suppose each node transmits with power $P_0$ under path-loss propagation. The SINR of node $i$ communicating with AP $k$ under activation set $S(t)$ is
\[
\text{SINR}_i(S(t))
= \frac{P_0 d_{ik}^{-\alpha}}
{N_0B + \sum_{j\in S(t), j\neq i}\eta_{ij} P_0 d_{jk}^{-\alpha}},
\]
where $d_{ik}$ is the distance between node $i$ and AP $k$, $\alpha$ is the path-loss exponent, $B$ is bandwidth, and $\eta_{ij}\in[0,1]$ represents spectral overlap between transmissions of nodes $i$ and $j$. The resulting transmission time is
\[
\tau_i(S(t))
= \bigg\lceil \frac{L/T_s}{B\log_2(1+\text{SINR}_i(S(t)))} \bigg\rceil,
\]
where $T_s$ is the time-slot duration and $\tau_i(S(t))$ is measured in number of time-slots.

While this example uses a simplified physical-layer model, our analysis treats $\tau_i(S)$ as a general measurable function, allowing incorporation of fading, obstacles, adaptive modulation, different update sizes, and realistic WLAN behavior. Note that binary interference constraints with unit rate links, as studied in \cite{interference}, are a special case of the general system model we analyze in our work.

\subsection{Age of Information}\label{section2.2}

Let $A_i(t)$ denote the Age of Information (AoI) of node $i$ at time $t$, defined as the time elapsed since the most recently received update was generated. In realistic WLAN operation, nodes within an activation set complete transmissions at different times, causing asynchronous age resets. Exact modeling therefore leads to a stochastic control problem in which scheduling decisions influence both future ages and future completion times. The resulting system becomes a high-dimensional semi-Markov decision process that is analytically intractable.

To obtain tractable structural insights, we adopt a frame-based abstraction. We describe system evolution through frames indexed by $b=1,2,\dots$. Frame $b$ starts at time $t_b$, and at the beginning of it the controller selects a transmission set $S_b$. The following assumptions hold:
\begin{itemize}
\item The activation set remains fixed during the frame.
\item The controller waits until all nodes in $S_b$ finish transmitting before selecting the next set.
\item Channels are reliable; although our approach allows for an easy extension to the unreliable channel case.
\end{itemize}

The duration of frame $b$ is determined by the slowest transmission:
\begin{equation}
\Delta(S_b) = \max_{i\in S_b} \tau_i(S_b).
\end{equation}
Thus, a single slow transmission can delay the completion of all concurrent updates, coupling the age evolution of all nodes. AoI evolves as follows over time-slots
\begin{equation}
\label{eq:aoi_evolution}
A_i(t+1)=
\begin{cases}
\Delta(S_b), & \text{if } t+1=t_b+\Delta_b \text{ and } i\in S_b,\\
A_i(t)+1, & \text{otherwise}.
\end{cases}
\end{equation}
The reset value $\Delta(S_b)$ reflects that updates generated at the start of the frame are already $\Delta(S_b)$ slots old upon delivery.

This abstraction enables tractable analysis and policy design while retaining the key coupling induced by interference-dependent transmission durations. Note that our assumption of synchronous delivery necessarily provides a sample path upper bound for the AoI of any node given any sequence of scheduling decisions. Thus, using this frame based model still leads to AoI analysis and policies that perform well under asynchronous packet completions. We verify this with simulations in Sec.~\ref{sec:simulations}, where we show that the closed form expressions for AoI we derive closely match the network average AoI even when packet completions are asynchronous.

\subsection{Optimization Problem}
Our goal is to minimize the long-term weighted average AoI:
\begin{equation}
\label{eq:aoi_optimization}
\boxed{
\begin{aligned}
\min_{\pi\in\Pi}\;
& \lim_{T\to\infty} \frac{1}{N} \frac{1}{T}
\sum_{t=1}^{T}\sum_{i=1}^{N} w_i\,\mathbb{E}[A_i(t)]\\
\text{s.t. }&
|S(t)\cap\mathcal{N}_k|\le1,
\quad \forall k,t,
\end{aligned}}
\end{equation}
where $\Pi$ denotes the class of causal scheduling policies.

Solving~\eqref{eq:aoi_optimization} is challenging for three reasons. First, the scheduler must search over a combinatorial action space of size $\prod_{k=1}^{K}(|\mathcal{N}_k|+1)$ in each time-slot. So an optimal policy found using dynamic programming will suffer from the curse of dimensionality very quickly as the number of APs $K$ increases. Second, inter-AP interference couples transmission durations across cells so the problem cannot be easily decoupled to single AP formulations that are solved in prior works. Third, scheduling decisions determine the frame duration $\Delta_b$, thereby altering the time scale and rendering standard Lyapunov analysis not directly applicable.

In the rest of the paper, we will show how to overcome these challenges and design policies that can solve \eqref{eq:aoi_optimization} in a computationally efficient manner. To do so, we first derive a lower bound on the achievable AoI.

\section{Lower Bound}
\label{sec:lower_bound}
In this section, we will derive a lower bound on the achievable long-term weighted average AoI under any admissible scheduling policy $\pi \in \Pi$ for the problem \eqref{eq:aoi_optimization}. This allows us to prove performance guarantees for our proposed policies by comparing them to the lower bound. To do so, we start by defining two useful quantities. 

\textbf{Waiting Time and Service Time:} Consider a sample path $\omega \in \Omega$ associated with a scheduling policy $\pi \in \Pi$ and a time horizon $T$. Let $D_{i}(T)$ denote the total number of delivered packets from node $i$ to base station $k$ by the end of slot $T$ and let $m\in \left\{ 1,...,D_{i}(T)\right\}$ be the index of the delivered packets from node $i$. We define the waiting time of the $m$th packet from node $i$ as the number of time slots between the completion of transmission of the ($m-1$)th packet from node $i$ and the start of transmission of the $m$th packet from node $i$, and denote it by $W_{i}[m]$. We define the service time of the $m$th packet from node $i$ as the time required for the $m$th packet from node $i$ to complete transmission, and denote it by $S_{i}[m]$. 

The waiting time reflects the impact of the inter-scheduling interval of node $i$ on its AoI, and the service time captures the impact of the transmission time. The inter-delivery time between the ($m-1$)th and the $m$th packets from node $i$ is given by $W_{i}[m]+S_{i}[m]$, during which the AoI of node $i$ increases linearly. We can use these two quantities to characterize the evolution of the AoI of node $i$ over the entire time horizon. Between the time slot after the ($m-1$)th packet finishes transmission and the time slot at which the $m$th packet finishes transmission, $A_{i}(t)$ evolves as $S_{i}[m-1],S_{i}[m-1]+2,...,S_{i}[m-1]+W_{i}[m]+S_{i}[m]-1$. Let $R_{i}$ denote the number of remaining time slots after the last packet delivery from node $i$. During these time slots, $A_{i}(t)$ evolves as $S_{i}[D_{i}(T)]+1,...,S_{i}[D_{i}(T)]+R_{i}$.

Define the operator $\mathbb{\bar{M}}[\cdot]$ that calculates the sample mean of a set of values. Using this operator, the sample mean of $W_{i}[m]$ and $S_{i}[m]$ is given by:
\begin{equation} \label{eq:6}
    \mathbb{\bar{M}}[W_{i}[m]] = \frac{1}{D_{i}(T)}\sum_{m=1}^{D_{i}(T)}W_{i}[m],
\end{equation}
\begin{equation} \label{eq:7}
    \mathbb{\bar{M}}[S_{i}[m]] = \frac{1}{D_{i}(T)}\sum_{m=1}^{D_{i}(T)}S_{i}[m].
\end{equation}
We can then derive an explicit expression of the long-term weighted average AoI in terms of $W_{i}[m]$ and $S_{i}[m]$.
\begin{lemma}
\label{lemma:J_pi}
    The infinite-horizon weighted average AoI achieved by policy $\pi$, denoted as $J^{\pi}$, can be written as
    \begin{equation}\label{eq:11}
\begin{split}
    J^\pi = & \lim_{T\rightarrow \infty}\frac{1}{N}\sum_{i = 1}^{N}w_{i}\left[\frac{\mathbb{\bar{M}}[(W_{i}[m]+S_{i}[m])^2]}{2\mathbb{\bar{M}}[W_{i}[m]+S_{i}[m]]}\right.\\
        &\left.+\frac{\mathbb{\bar{M}}[S_{i}[m-1](W_{i}[m]+S_{i}[m])]}{\mathbb{\bar{M}}[W_{i}[m]+S_{i}[m]]}-\frac{1}{2}\right].
\end{split}
\end{equation}
\end{lemma}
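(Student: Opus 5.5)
We follow the standard sample-path argument for AoI under general inter-delivery times. Fix a policy $\pi$, a sample path $\omega$ and a horizon $T$. We partition $\{1,\dots,T\}$ for each node $i$ into $D_i(T)$ inter-delivery intervals plus the residual of length $R_i$. Interval $m$ has length $I_i[m]:=W_i[m]+S_i[m]$. On this interval the age starts at $S_i[m-1]$ (the reset value from the previous delivery, with $S_i[0]$ set to the initial age) and increases by one each slot. Summing the arithmetic progression $S_i[m-1], S_i[m-1]+1,\dots,S_i[m-1]+I_i[m]-1$ gives
\begin{equation*}
\sum_{t\in \text{interval } m} A_i(t) = I_i[m]\,S_i[m-1] + \frac{I_i[m]^2}{2} - \frac{I_i[m]}{2}.
\end{equation*}
The residual segment contributes $R_i S_i[D_i(T)] + R_i(R_i+1)/2$.

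Next we write $T=\sum_{m=1}^{D_i(T)} I_i[m] + R_i$. Dividing the total age of node $i$ by $T$ and then numerator and denominator by $D_i(T)$, we get
\begin{equation*}
\frac{1}{T}\sum_{t=1}^T A_i(t)=\frac{\bar{\mathbb{M}}[I_i^2]/2+\bar{\mathbb{M}}[S_i[m-1]I_i]-\bar{\mathbb{M}}[I_i]/2 + \frac{1}{D_i(T)}\big(R_iS_i[D_i(T)]+\tfrac{R_i(R_i+1)}{2}\big)}{\bar{\mathbb{M}}[I_i]+R_i/D_i(T)}.
\end{equation*}
We then take $T\to\infty$, weight by $w_i$, sum over $i$ and divide by $N$. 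This produces exactly the three terms of \eqref{eq:11}, once the residual terms are shown to vanish. Passing from the sample-path time average to $\mathbb{E}[A_i(t)]$ in \eqref{eq:aoi_optimization} is handled as in the literature: we interpret $J^\pi$ as the limit along sample paths, or invoke bounded convergence / Fatou when ages have bounded moments.

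\textbf{The main obstacle} is justifying that the residual terms vanish. We need $R_i/D_i(T)\to 0$, $R_i^2/(T)\to 0$ and $R_iS_i[D_i(T)]/T\to 0$. We restrict attention to policies with finite average AoI, since otherwise $J^\pi=\infty$ and the lemma, used only for a lower bound, is trivial. For such policies every node is served infinitely often, so $D_i(T)\to\infty$. Moreover, a residual of length $R_i$ contributes at least $R_i^2/2$ to the total age. Finite average age therefore forces $R_i^2/T\to 0$, hence $R_i/T\to 0$, and so $R_i/D_i(T)=(R_i/T)(T/D_i(T))$, which tends to zero whenever the delivery rate stays bounded away from zero. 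Service times are bounded by $\max_S\Delta(S)<\infty$ since the action set is finite. This gives $R_iS_i[D_i(T)]/T\to 0$. With these facts the residual terms drop out and the expression in Lemma~\ref{lemma:J_pi} follows.
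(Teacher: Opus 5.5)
Your proposal is correct and follows the same route as the paper's appendix proof: split $\{1,\dots,T\}$ into the $D_i(T)$ inter-delivery intervals of length $W_i[m]+S_i[m]$ plus a residual of length $R_i$, sum the arithmetic progression of ages on each piece, divide numerator and denominator by $D_i(T)$ to get ratios of sample means, and let the residual terms vanish as $T\to\infty$. Your bookkeeping (ages $S_i[m-1],\dots,S_i[m-1]+W_i[m]+S_i[m]-1$ on interval $m$) is what produces the $-\tfrac12$ in the statement, whereas the appendix's $(S_i[m-1]+\tfrac12)$ term would give $+\tfrac12$; your residual argument is also more careful than the paper's bare assumption $R_i<\infty$ (the step $R_i^2/T\to0$ needs the time-average limit to exist, not merely to be finite, and the positive delivery rate you condition on is itself implied by finite average AoI via Cauchy--Schwarz on the interval lengths).
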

\begin{proof}
See Appendix \ref{proof:lemma_1}.
\end{proof}

The first term on the RHS of Equation \eqref{eq:11} is the ratio of the sample second moment of the inter-delivery time over the sample first moment of the inter-delivery time. Let's define the operator $\mathbb{\bar{V}}[\cdot]$ that calculates the sample variance of a set of values. The first term of the RHS of Equation \eqref{eq:11} can then be written as $\mathbb{\bar{M}}[W_{i}[m]+S_{i}[m]]+\frac{\mathbb{\bar{V}}[W_{i}[m]+S_{i}[m]]}{2\mathbb{\bar{M}}[W_{i}[m]+S_{i}[m]]}$. This is the sum of the average inter-delivery time and a measure of the dispersion of the inter-delivery time, which implies that minimizing the long-term weighted average AoI requires scheduling policies that not only reduce the mean inter-delivery time but also ensure regular and evenly spaced packet deliveries, which is a well known observation in AoI literature. The second term of the RHS of Equation \eqref{eq:11} depends on the service time of the previous packet and captures the effect of time-varying, interference-dependent transmission times on the long-term weighted average AoI. When the transmission time has a fixed length of one slot for all active sets, Equation \eqref{eq:11} reduces to the known result in \cite{throughput}. 

Lemma \ref{lemma:J_pi} gives us an expression of the long-term weighted average AoI of an arbitrary admissible scheduling policy $\pi \in \Pi$. Applying Jensen's Inequality and the fact that $S_{i}[m-1](W_{i}[m]+S_{i}[m])>0$ to Equation \eqref{eq:11}, we can obtain a lower bound for the performance of all admissible scheduling policies.

We define the long-term throughput of node $i$ under policy $\pi$ as the total number of delivered packets over the time horizon $T$ as $T\rightarrow \infty$. This can be expressed as:
\begin{equation}\label{eq:throughput_1}
    q_{i}^\pi =\lim_{T\rightarrow\infty}\frac{D_{i}(T)}{T}.
\end{equation}
\begin{theorem}
\label{thm:lb}
    The optimization problem in (\ref{eq:LB_opt}) provides a lower bound $L_B$ to the AoI optimization problem.
    \begin{equation} \label{eq:LB_opt}
    \begin{split}
        L_B = &\min_{\pi\in\Pi}\frac{1}{N}\sum_{i = 1}^{N}w_{i}\left(\frac{1}{2q_{i}^\pi}-\frac{1}{2}\right)\\
        \text{s.t.}\;&
|S(t)\cap\mathcal{N}_k|\le1,
\quad \forall k,t.
    \end{split}
\end{equation}
\end{theorem}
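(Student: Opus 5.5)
The plan is to start from the exact expression for $J^\pi$ in Lemma~\ref{lemma:J_pi} and lower bound its three terms separately for every sample path of an arbitrary admissible policy $\pi\in\Pi$. First I drop the second term. Since $S_i[m-1]\ge 1>0$ and $W_i[m]+S_i[m]>0$, the sample mean $\mathbb{\bar{M}}[S_i[m-1](W_i[m]+S_i[m])]$ is nonnegative, and so is the whole ratio. Removing it only decreases the right-hand side of \eqref{eq:11}.

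For the first term, apply Jensen's inequality, i.e.\ the sample variance is nonnegative: $\mathbb{\bar{M}}[X^2]\ge(\mathbb{\bar{M}}[X])^2$ with $X=W_i[m]+S_i[m]$. This gives
\begin{equation*}
\frac{\mathbb{\bar{M}}[(W_i[m]+S_i[m])^2]}{2\mathbb{\bar{M}}[W_i[m]+S_i[m]]}\ \ge\ \frac{1}{2}\mathbb{\bar{M}}[W_i[m]+S_i[m]].
\end{equation*}
Next I relate the mean inter-delivery time to throughput. By definition, $\sum_{m=1}^{D_i(T)}(W_i[m]+S_i[m])+R_i=T$, so
\begin{equation*}
\mathbb{\bar{M}}[W_i[m]+S_i[m]]=\frac{T-R_i}{D_i(T)}.
\end{equation*}

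The main obstacle is the residual $R_i$. Here I would follow the standard argument of \cite{throughput}. If $R_i/T\not\to 0$ along the sample path, the AoI contribution from the final interval is of order $R_i^2/T$, which is unbounded. Such paths therefore yield infinite $J^\pi$ and can be discarded. Similarly, any node with $q_i^\pi=0$ makes both sides infinite. Otherwise $(T-R_i)/D_i(T)\to 1/q_i^\pi$, using the definition \eqref{eq:throughput_1}. I will assume the limits exist, or else work with $\liminf$ and pass to a subsequence. Expected values are handled by Fatou's lemma, or by applying the bound pathwise and then taking expectations, since the final bound is convex in $q_i^\pi$.

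Combining the three pieces gives $J^\pi\ge\frac{1}{N}\sum_i w_i\big(\frac{1}{2q_i^\pi}-\frac12\big)$ for every $\pi$. Every policy in $\Pi$ obeys the per-AP constraint $|S(t)\cap\mathcal{N}_k|\le1$, so minimizing both sides over $\Pi$ yields $\min_\pi J^\pi\ge L_B$, as claimed. The looser constant $\frac12$ in place of the full mean comes from keeping only half the Jensen bound, which is still valid.
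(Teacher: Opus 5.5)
Your proposal is correct and follows the paper's proof: drop the nonnegative cross term $\mathbb{\bar{M}}[S_i[m-1](W_i[m]+S_i[m])]/\mathbb{\bar{M}}[W_i[m]+S_i[m]]$, apply Jensen to the first term of \eqref{eq:11}, and use the horizon decomposition \eqref{eq:8} to identify $\mathbb{\bar{M}}[W_i[m]+S_i[m]]$ with $1/q_i^\pi$. Your handling of the residual $R_i$, zero-throughput nodes and expectations is more careful than the paper, which simply assumes $R_i<\infty$; your closing remark, however, is misattributed, since the factor $\tfrac12$ in $\tfrac{1}{2q_i^\pi}$ is already in the denominator of the first term of \eqref{eq:11} and Jensen is applied in full rather than halved, though this does not affect the argument.
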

\begin{proof}
Applying Jensen's Inequality and $S_{i}[m-1](W_{i}[m]+S_{i}[m])>0$ yields
\begin{equation}\label{eq:jpi}
\begin{split}
    J^\pi > & \lim_{T\rightarrow \infty}\frac{1}{N}\sum_{i = 1}^{N}w_{i}\left[\frac{\mathbb{\bar{M}}[(W_{i}[m]+S_{i}[m])]}{2}-\frac{1}{2}\right].
\end{split}
\end{equation}
Substituting (\ref{eq:8}) into (\ref{eq:throughput_1}) yields
\begin{equation}\label{eq:throughput_2}
\begin{split}
    q_{i}^\pi &=\lim_{T\rightarrow\infty}\frac{1}{\mathbb{\bar{M}}[W_{i}[m]]+\mathbb{\bar{M}}[S_{i}[m]]+\frac{R_{i_k}}{D_{i}(T)}}\\
    &=\frac{1}{\mathbb{\bar{M}}[W_{i}[m]]+\mathbb{\bar{M}}[S_{i}[m]]}.
\end{split}
\end{equation}
Substituting (\ref{eq:throughput_2}) into (\ref{eq:jpi}) yields
\begin{equation}\label{eq:jpi_lb}
\begin{split}
    J^\pi > & \frac{1}{N}\sum_{i = 1}^{N}w_{i}\left(\frac{1}{2q_{i}^\pi}-\frac{1}{2}\right).
\end{split}
\end{equation}
\end{proof}
Equation \eqref{eq:LB_opt} is consistent with the lower bound in the fixed single-slot transmission time case derived in \cite{throughput}. This follows from the fact that the second term on the RHS of Equation \eqref{eq:11}, which accounts for the impact of the time-varying, interference-dependent transmission times, has been omitted in the derivation of the lower bound. This could have led to a loose lower bound $L_B$, but in the next section, we show that a stationary randomized policy yields a tight optimality ratio with respect to this lower bound. 

Equation \eqref{eq:LB_opt} also aligns with the intermediate result in \cite{LargeAndSmall}, in which the authors consider updates consisting of multiple packets delivered over multiple time slots, and the AoI is updated only when all packets in an update are delivered. In \cite{LargeAndSmall}, the authors derive an analytical expression for the lower bound; however, obtaining such a closed-form expression is challenging in our setting, where $q_{i}^\pi$ depends on scheduling decisions that are coupled across all nodes in the transmission set. Nevertheless, the long-term throughput that solves \eqref{eq:LB_opt} can be achieved by a stationary randomized policy, which we will derive in the following section. Substituting this expression into \eqref{eq:LB_opt} allows the lower bound to be computed numerically.

\section{Stationary Randomized Policies} \label{section:srp}
Let $\Pi^{sr}$ denote the class of stationary randomized policies. The centralized scheduler running a stationary randomized policy selects a subset of nodes $S$ to transmit at the beginning of each frame with a probability $\mu_{S} \in [0,1]$. These probabilities are fixed for the entire time horizon. Recall that the subset $S$ must satisfy the constraint that at most one node can transmit to each AP, i.e. $\mathcal{I}=\left\{S\subseteq \mathcal{N}:|S\bigcap \mathcal{N}_k|\leq1,\forall k \in \left\{1,...,K\right\}\right\}$. The probabilities satisfy the condition that $\sum_{S\in \mathcal{I}}\mu_{S}\leq1.$

Then each policy $R \in \Pi^{sr}$ is fully described by the set of probabilities $\left\{\mu_{S}|S\in\mathcal{I}\right\}$. The set-wide transmission completion time is given by $\Delta(S)$. Theorem \ref{thm:srp_ewsaoi} relates the long-term weighted average AoI to the scheduling probabilities $\left\{\mu_{S}|S\in\mathcal{I}\right\}$ and the set-wide transmission completion time $\Delta(S)$.

\begin{theorem}
    \label{thm:srp_ewsaoi}
    Consider any stationary randomized policy R with scheduling probabilities $\left\{\mu_{S}|S\in\mathcal{I}\right\}$. The long-term weighted average AoI is given by
    \begin{equation} \label{eq:J_R_thm}
    \begin{split}
        \mathbb{E}[J^R] &= \frac{1}{N}\sum_{i = 1}^{N}w_{i}\Bigg(\frac{\sum_{S\in \mathcal{I}}\Delta(S)^2\mu_{S}}{2\sum_{S\in \mathcal{I}}\Delta(S)\mu_{S}} \\
        &\quad +\frac{\sum_{S\in \mathcal{I}}\Delta(S)\mu_{S}}{\sum_{ \{S\in \mathcal{I}: i\in S \}}\mu_{S}}-\frac{1}{2}\Bigg).
    \end{split}
    \end{equation}
\end{theorem}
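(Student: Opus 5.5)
The plan is to avoid the general sample-path expression of Lemma~\ref{lemma:J_pi}. Instead I would run a frame-level renewal--reward argument that exploits the i.i.d.\ structure of a stationary randomized policy. Under policy $R$, the activation sets $S_1,S_2,\dots$ are i.i.d.\ with $\Pr(S_b=S)=\mu_S$. I will assume $\sum_{S\in\mathcal{I}}\mu_S=1$; any leftover probability is treated as selecting the empty set $\emptyset\in\mathcal{I}$ with some idle frame length $\Delta(\emptyset)$, so all sums are over $\mathcal{I}$. Hence the frame lengths $\Delta_b=\Delta(S_b)$ are i.i.d. I write $\mathbb{E}[\Delta]=\sum_S\Delta(S)\mu_S$, $\mathbb{E}[\Delta^2]=\sum_S\Delta(S)^2\mu_S$, and $p_i=\sum_{S\ni i}\mu_S$. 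The crucial structural fact is that $S_b$ is independent of the ages at the start of frame $b$, because the policy ignores the state.

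\textbf{Step 1 (per-frame area).} Fix node $i$ and let $a_b=A_i(t_b)$ be its age at the start of frame $b$. By \eqref{eq:aoi_evolution}, during the $\Delta_b$ slots of frame $b$ the age takes the values $a_b,a_b+1,\dots,a_b+\Delta_b-1$; a reset, if any, only takes effect at the next frame boundary. The accumulated area is therefore
\begin{equation*}
Z_b=a_b\Delta_b+\tfrac{1}{2}\Delta_b(\Delta_b-1).
\end{equation*}
Independence of $S_b$ and $a_b$ then gives $\mathbb{E}[Z_b]=\mathbb{E}[a_b]\,\mathbb{E}[\Delta]+\tfrac12(\mathbb{E}[\Delta^2]-\mathbb{E}[\Delta])$.

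\textbf{Step 2 (frame-start age).} The recursion is $a_{b+1}=\Delta_b$ if $i\in S_b$, and $a_{b+1}=a_b+\Delta_b$ otherwise, i.e.\ $a_{b+1}=\Delta_b+a_b\mathbf{1}\{i\notin S_b\}$. Taking expectations and using independence once more gives $\mathbb{E}[a_{b+1}]=\mathbb{E}[\Delta]+(1-p_i)\mathbb{E}[a_b]$. In stationarity (or in the limit of Cesàro averages over $b$) this yields $\bar a=\mathbb{E}[\Delta]/p_i$, provided $p_i>0$; if $p_i=0$ both sides of \eqref{eq:J_R_thm} are infinite.

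\textbf{Step 3 (time average).} The pair $(a_b,S_b)$ forms a Markov chain that regenerates at every frame in which $i$ is served. The renewal--reward theorem, applied with rewards $Z_b$ and cycle lengths $\Delta_b$, therefore gives the long-run time average
\begin{equation*}
\lim_{T\to\infty}\frac1T\sum_{t\le T}A_i(t)=\frac{\mathbb{E}[\bar Z]}{\mathbb{E}[\Delta]}=\frac{\mathbb{E}[\Delta]}{p_i}+\frac{\mathbb{E}[\Delta^2]}{2\mathbb{E}[\Delta]}-\frac12
\end{equation*}
almost surely, with $\bar Z$ the stationary per-frame area. Substituting the sums over $\mu_S$, weighting by $w_i$, averaging over $i$ and dividing by $N$ gives exactly \eqref{eq:J_R_thm}.

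\textbf{Main obstacle.} The delicate point is Step 3. I must justify replacing the ratio of time sums by the ratio of stationary expectations, and in particular note that the frame-start age is sampled per frame rather than per slot. This is exactly why the length-biased term $\mathbb{E}[\Delta^2]/(2\mathbb{E}[\Delta])$ appears. I would handle it by taking the regeneration cycles to be the inter-service intervals of node $i$. These contain i.i.d.\ $\mathrm{Geom}(p_i)$ numbers of frames, so the cycle rewards and lengths are i.i.d.\ with finite means when $\Delta$ is bounded. As a cross-check, the same answer should follow from Lemma~\ref{lemma:J_pi}: here $W_i[m]+S_i[m]$ is a geometric sum of i.i.d.\ frame lengths, and $S_i[m-1]$ is independent of the next interval and distributed as $\Delta$ conditioned on $i\in S$.
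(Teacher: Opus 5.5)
Your proposal is correct and takes a genuinely different route from the paper.

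\textbf{Comparison with the paper's route.} The paper specializes the general sample-path identity of Lemma~\ref{lemma:J_pi}. It argues that $W_i[m]$, $S_i[m]$ and $S_i[m-1]$ are independent under $R$. It then computes $\mathbb{E}[S_i]$, $\mathbb{E}[S_i^2]$, $\mathbb{E}[W_i]$ and $\mathbb{E}[W_i^2]$ by first-step analysis on the next frame, and substitutes, after which the algebra collapses.

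You instead work frame by frame with the per-frame area $a_b\Delta_b+\tfrac12\Delta_b(\Delta_b-1)$. This is the same bookkeeping the paper uses only later, in \eqref{eq:one_frame_cost}, for Max-Weight. As a result you need just the first moment of the frame-start age and the single fact that $S_b$ is independent of $a_b$; the second moment of the waiting time never appears.

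Your route also explains the two terms. $\mathbb{E}[\Delta]/p_i=1/q_i^R$ is the mean frame-start age, and $\mathbb{E}[\Delta^2]/(2\mathbb{E}[\Delta])-\tfrac12$ is length-biased intra-frame aging. This directly justifies the form \eqref{eq:J_R_UB_1} used in Appendix~\ref{proof:thm4.2} and the split $\mathbb{E}[J^R]=R_1+\Psi_{SR}$ asserted in Appendix~\ref{proof:thm_mwopt}. The paper's route, in exchange, reuses a lemma valid for every policy and yields the waiting- and service-time moments as by-products.

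\textbf{Two points to tighten.}

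First, regenerating at every frame in which $i$ is served does not make the cycle rewards i.i.d. A cycle starts with age equal to $\Delta$ of the previous cycle's service frame. So the reward of cycle $m$ is that carried-over length times the cycle length $L_m$, plus a function of cycle $m$ alone. Consecutive rewards therefore share a variable and are only $1$-dependent; the lengths $L_m$ are i.i.d. There are two easy repairs:
\begin{enumerate}
\item Apply the strong law separately to the odd- and even-indexed cycles.
\item Regenerate only at frames where a fixed set $S^\star\ni i$ with $\mu_{S^\star}>0$ is chosen. Then $a_{b+1}=\Delta(S^\star)$ is deterministic and the cycles are genuinely i.i.d.
\end{enumerate}

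Second, the right-hand side of \eqref{eq:J_R_thm} is invariant under rescaling $\mu$. So for $\sum_S\mu_S<1$ the statement corresponds to normalizing $\mu$, i.e.\ treating leftover mass as a zero-length redraw. Your convention of an idle frame of positive length $\Delta(\emptyset)$ proves the formula for an enlarged $\mu_\emptyset$ instead, so simply normalize. The paper's law-of-total-expectation step tacitly assumes $\sum_S\mu_S=1$ as well.
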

\begin{proof}
    See Appendix \ref{proof:thm4.1}.
\end{proof}

Equation \eqref{eq:J_R_thm} reduces to the known result in \cite{aoi1} and \cite{throughput} when $\Delta(S) = 1, \forall S\in\mathcal{I}$. Using the above theorem, we can find the optimal scheduling probabilities $\left\{\mu_{S}^*|S\in\mathcal{I}\right\}$ by solving the following optimization problem:
\begin{equation}\label{eq:J_R_opt}
\min_{\left\{\mu_{S}|S\in\mathcal{I}\right\}}\mathbb{E}[J^R],\quad \text{s.t.} \sum_{S\in \mathcal{I}}\mu_{S}\leq1,\mu_S \geq0, \forall S\in\mathcal{I}
\end{equation}
The optimization problem \eqref{eq:J_R_opt} can be converted into a convex optimization problem using a change of variables.
\begin{lemma}\label{lemma:J_R_cvx}
    Solving the optimization problem \eqref{eq:J_R_opt} is equivalent to solving the following convex optimization problem:
    \begin{equation}\label{eq:J_R_opt_cvx}
    \begin{aligned}
        \min_{\mathbf{x},c}\;&\frac{1}{N}\sum_{i=1}^Nw_i\left(\mathbf{a}^{\top}\mathbf{x}+\frac{1}{2\bm{M}_{i}\mathbf{x}}\right)\\
        \text{s.t.} \;&\mathbf{1}^{\top}\mathbf{x} \leq c,\\
        \;&2\mathbf{b}^{\top}\mathbf{x}+2c=1,\\
        \;&\mathbf{x}\geq 0        
    \end{aligned}
    \end{equation}
    where $\mathbf{a}, \mathbf{b}, \mathbf{x} \in \mathbb{R}^{|\mathcal{I}|}$, and $\bm{M} \in \mathbb{R}^{N \times |\mathcal{I}|}$, with $\mathbf{a} = (\Delta(S)^2)_{S\in\mathcal{I}}$, $\mathbf{b} = (\Delta(S))_{S\in\mathcal{I}}$, and $M_{i,S} = 1$ if $i \in S$, and $0$ otherwise. A solution $\mathbf{x}$ to the convex optimization problem \eqref{eq:J_R_opt_cvx} can be translated to a solution of the original optimization problem \eqref{eq:J_R_opt} via $\mu_S = \frac{x_S}{c}, \forall S\in\mathcal{I}$.
\end{lemma}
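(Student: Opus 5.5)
The plan is a Charnes--Cooper (linear-fractional) change of variables. The first step is to rewrite the objective of Theorem~\ref{thm:srp_ewsaoi} in vector form. With $\mathbf{a},\mathbf{b},\bm{M}$ as in the statement and $\bm{\mu}=(\mu_S)_{S\in\mathcal{I}}$, each summand of \eqref{eq:J_R_thm} reads
\[
\frac{\mathbf{a}^\top\bm{\mu}}{2\,\mathbf{b}^\top\bm{\mu}}+\frac{\mathbf{b}^\top\bm{\mu}}{\bm{M}_i\bm{\mu}}-\frac12 .
\]
Both fractions are ratios of linear forms of equal degree, so $\mathbb{E}[J^R]$ is invariant under $\bm{\mu}\mapsto\lambda\bm{\mu}$ for $\lambda>0$. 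Two consequences follow.
\begin{itemize}
\item The budget $\sum_S\mu_S\le 1$ only fixes a scale. It can be replaced by an arbitrary normalization, and the scale can later be chosen to absorb an auxiliary variable.
\item The constant $-\tfrac12$ and the factor $\tfrac1N\sum_i w_i$ can be carried along unchanged.
\end{itemize}

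Next I would set $\mathbf{x}=c\,\bm{\mu}$, with $c>0$ a normalizing scalar chosen so that the denominators become fixed constants. Substituting $\bm{\mu}=\mathbf{x}/c$ and using homogeneity, the common denominator $\mathbf{b}^\top\bm{\mu}$ is pinned by the affine equality $2\mathbf{b}^\top\mathbf{x}+2c=1$. The first fraction then becomes the linear term $\mathbf{a}^\top\mathbf{x}$, and the second becomes $1/(2\bm{M}_i\mathbf{x})$. The original budget $\mathbf{1}^\top\bm{\mu}\le 1$ becomes $\mathbf{1}^\top\mathbf{x}\le c$, and $\bm{\mu}\ge0$ becomes $\mathbf{x}\ge0$.

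To prove equivalence I would exhibit maps in both directions between feasible points that preserve the objective value, up to the common additive constant.
\begin{itemize}
\item Forward: given a feasible $\bm{\mu}$ with $\bm{M}_i\bm{\mu}>0$ for all $i$, define $c$ from the equality constraint and set $\mathbf{x}=c\bm{\mu}$.
\item Backward: given a feasible $(\mathbf{x},c)$, set $\mu_S=x_S/c$. This satisfies $\sum_S\mu_S\le1$ because $\mathbf{1}^\top\mathbf{x}\le c$.
\end{itemize}
Any $\bm{\mu}$ with $\bm{M}_i\bm{\mu}=0$ for some $i$ gives infinite AoI and can be discarded, so optimal values and optimizers correspond. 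Convexity is then immediate. The term $\mathbf{a}^\top\mathbf{x}$ is linear, and $1/(2\bm{M}_i\mathbf{x})$ is the composition of the convex decreasing map $u\mapsto 1/(2u)$ on $u>0$ with an affine map. A nonnegatively weighted sum of these is convex, and the constraints are linear.

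The main obstacle is the bookkeeping of the normalization. I must check exactly which scalar relation among $c$, $\mathbf{b}^\top\bm{\mu}$ and the idle probability $1-\mathbf{1}^\top\bm{\mu}$ makes $c\,\mathbf{a}^\top\bm{\mu}$ coincide with $\mathbf{a}^\top\bm{\mu}/(2\mathbf{b}^\top\bm{\mu})$ and $c\,\bm{M}_i\bm{\mu}$ coincide with $\bm{M}_i\bm{\mu}/(2\mathbf{b}^\top\bm{\mu})$. The simplest version is $c=1/(2\mathbf{b}^\top\bm{\mu})$, which gives the equality $\mathbf{b}^\top\mathbf{x}=\tfrac12$. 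The extra $2c$ in the stated constraint must then come from accounting for an idle or empty-frame contribution. I would first verify the identity with the pure scaling $c=1/(2\mathbf{b}^\top\bm{\mu})$. I would then use scale invariance to show that optimizing over the remaining one-dimensional freedom in $c$ reproduces the stated constraint without changing the optimum.
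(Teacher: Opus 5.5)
Your core substitution, $\mathbf{x}=c\bm{\mu}$ with $c=1/(2\mathbf{b}^{\top}\bm{\mu})$, is exactly the one in the paper's proof. Your two-way correspondence and convexity argument are sound for it. But, as you noticed, it yields the normalization $2\mathbf{b}^{\top}\mathbf{x}=1$, not $2\mathbf{b}^{\top}\mathbf{x}+2c=1$. The step you defer to the end, recovering the extra $2c$ through the ``remaining freedom in $c$'' or an idle-frame term, cannot be carried out.

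There is no remaining freedom. For $\mathbf{x}=c\bm{\mu}$, matching $1/(2\bm{M}_i\mathbf{x})$ with $\mathbf{b}^{\top}\bm{\mu}/(\bm{M}_i\bm{\mu})$ forces $c=1/(2\mathbf{b}^{\top}\bm{\mu})$, hence $2\mathbf{b}^{\top}\mathbf{x}=1$, and the stated equality would then force $c=0$. Your earlier claim that the stated equality ``pins'' the denominators is also wrong. It gives $c=1/(2(\mathbf{b}^{\top}\bm{\mu}+1))$, so the transformed objective becomes $\frac{\mathbf{a}^{\top}\bm{\mu}}{2(\mathbf{b}^{\top}\bm{\mu}+1)}+\frac{\mathbf{b}^{\top}\bm{\mu}+1}{\bm{M}_i\bm{\mu}}$, which is not the objective of Theorem~\ref{thm:srp_ewsaoi}.

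Scale invariance does not rescue this. It holds for the objective in $\bm{\mu}$, but not for $\mathbf{a}^{\top}\mathbf{x}+1/(2\bm{M}_i\mathbf{x})$, whose two terms scale as $\lambda$ and $1/\lambda$. So replacing $\mathbf{b}^{\top}\mathbf{x}=\tfrac12$ by what the stated constraints reduce to after eliminating $c$, namely $(\mathbf{1}+\mathbf{b})^{\top}\mathbf{x}\le\tfrac12$, changes the problem. The idle-frame idea fails too. A unit-length idle frame taken with probability $1-\mathbf{1}^{\top}\bm{\mu}$ would produce $2(\mathbf{b}-\mathbf{1})^{\top}\mathbf{x}+2c=1$, and Theorem~\ref{thm:srp_ewsaoi} contains no idle term anyway.

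In fact the lemma as written, with the $+2c$, is false, so no argument can close this step. Take $N=K=2$ with one node per AP, unit weights, $\Delta(\{1\})=\Delta(\{2\})=1$ and $\Delta(\{1,2\})=2$; the empty schedule plays no role. By Theorem~\ref{thm:srp_ewsaoi}, alternating with $\mu_{\{1\}}=\mu_{\{2\}}=\tfrac12$ gives $\mathbb{E}[J^R]=2$, and this is optimal. Always scheduling both gives $2.5$.

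Write $(x_1,x_2,x_3)=(x_{\{1\}},x_{\{2\}},x_{\{1,2\}})$. After eliminating $c$, the stated program minimizes $x_1+x_2+4x_3+\frac{1}{4(x_1+x_3)}+\frac{1}{4(x_2+x_3)}$ over $\mathbf{x}\ge 0$ with $2x_1+2x_2+3x_3\le\tfrac12$. Both partial derivatives are negative on this region, so the optimum lies on the boundary, and along it the objective is strictly decreasing in $x_3$. Its unique minimizer is therefore $\mathbf{x}=(0,0,\tfrac16)$ with $c=\tfrac16$, which translates via $\mu_S=x_S/c$ to $\mu_{\{1,2\}}=1$, a strictly suboptimal policy.

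The paper's proof is the same one-line substitution as yours and simply glosses over this mismatch; the $+2c$ is evidently a typo. The fix is to prove the lemma with $2\mathbf{b}^{\top}\mathbf{x}=1$ in place of $2\mathbf{b}^{\top}\mathbf{x}+2c=1$. Your argument then goes through verbatim:
\begin{itemize}
\item The forward map $c=1/(2\mathbf{b}^{\top}\bm{\mu})$, $\mathbf{x}=c\bm{\mu}$ satisfies $\mathbf{1}^{\top}\mathbf{x}\le c$.
\item Any feasible $(\mathbf{x},c)$ has $c\ge\mathbf{1}^{\top}\mathbf{x}>0$, so $\bm{\mu}=\mathbf{x}/c$ meets the budget.
\item By homogeneity, $\frac{\mathbf{a}^{\top}\bm{\mu}}{2\mathbf{b}^{\top}\bm{\mu}}=\mathbf{a}^{\top}\mathbf{x}$ and $\frac{\mathbf{b}^{\top}\bm{\mu}}{\bm{M}_i\bm{\mu}}=\frac{1}{2\bm{M}_i\mathbf{x}}$ termwise.
\end{itemize}
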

\begin{proof}\label{proof:lemma_2}
Let $\bm{\mu} = (\mu_S)_{S \in \mathcal{I}} \in \mathbb{R}^{|\mathcal{I}|}$ denote the vector of probabilities over $\mathcal{I}$. The optimization problem in \eqref{eq:J_R_opt} can be written as
\begin{equation}\label{eq:J_opt_vec}
    \begin{aligned}
        \min_{\bm{\mu}}\;&\frac{1}{N}\sum_{i = 1}^{N}w_{i}\left(\frac{\mathbf{a}^{\top}\bm{\mu}}{2\mathbf{b}^{\top}\bm{\mu}}+\frac{\mathbf{b}^{\top}\bm{\mu}}{\bm{M}_{i}\bm{\mu}}-\frac{1}{2}\right),\\
        \text{s.t.}\;&\mathbf{1}^{\top}\bm{\mu}\leq1,\\
        &\bm{\mu}\geq0,
    \end{aligned}
\end{equation}
where $\bm{M}_{i}$ is the row vector corresponding to the $i$th row of $\bm{M}$. Let $\mathbf{x} = \frac{\bm{\mu}}{2\mathbf{b}^{\top}\bm{\mu}}$, $c = \frac{1}{2\mathbf{b}^{\top}\bm{\mu}}.$ Substituting $\mathbf{x}$ and $c$ into \eqref{eq:J_opt_vec} and omitting the constant term yields \eqref{eq:J_R_opt_cvx}.
\end{proof}

Lemma \ref{lemma:J_R_cvx} establishes that the optimization problem in \eqref{eq:J_R_opt} can be reformulated as an equivalent convex optimization problem. As a result, a globally optimal solution to \eqref{eq:J_R_opt} can be computed using standard convex optimization solvers such as CVX \cite{cvx}. The optimal solution yields the set of scheduling probabilities $\left\{\mu_{S}|S\in\mathcal{I}\right\}$ that specifies the optimal stationary randomized policy. We define the optimality ratio of the stationary randomized policy $R$ as $\rho^R = \mathbb{E}[J^R]/L_B$.
\begin{theorem}\label{thm:opt_ratio}
    Consider an optimal stationary randomized policy $\pi^*_{SR}$ defined by the scheduling probabilities $\left\{\mu_{S}^*|S\in\mathcal{I}\right\}$ that is a solution to (\ref{eq:J_R_opt}). Let $\pi_{LB}$ denote the stationary randomized policy defined by the scheduling probabilities $\left\{\mu_{S}^{LB}|S\in\mathcal{I}\right\}$ that solves the Lower Bound optimization problem in (\ref{eq:LB_opt}). The optimality ratio of $\pi^*_{SR}$ is such that
    \begin{equation}
    \label{eq:opt_sr}
    \begin{split}
        &\rho^{SR^*} \leq 2+\frac{\Psi_{LB}}{L_B}\\
        &\text{where }\Psi_{LB}= \frac{1}{2N}\sum_{i = 1}^{N}w_{i}\left(\frac{\sum_{S\in \mathcal{I}}\Delta(S)^2\mu_{S}^{LB}}{\sum_{S\in \mathcal{I}}\Delta(S)\mu_{S}^{LB}}+1\right).
    \end{split}
\end{equation}
\end{theorem}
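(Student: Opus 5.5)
Since $\pi^*_{SR}$ minimizes $\mathbb{E}[J^R]$ over all stationary randomized policies, we have $\mathbb{E}[J^{SR^*}]\le \mathbb{E}[J^{\pi_{LB}}]$. The plan is to evaluate $\mathbb{E}[J^{\pi_{LB}}]$ with Theorem~\ref{thm:srp_ewsaoi} and compare it term by term with $L_B$. The premise of the statement is that the throughputs $q_i^{LB}$ attaining the minimum in \eqref{eq:LB_opt} are realized by the stationary randomized policy $\pi_{LB}$ with probabilities $\mu_S^{LB}$, so that $L_B=\frac1N\sum_i w_i\left(\frac{1}{2q_i^{LB}}-\frac12\right)$.

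The first step is to compute the throughput of an arbitrary stationary randomized policy. Frames are i.i.d.; each frame has expected length $\sum_S \Delta(S)\mu_S$, and node $i$ is served with probability $\sum_{S\ni i}\mu_S$ per frame. By the renewal-reward theorem,
\begin{equation*}
q_i=\frac{\sum_{S\ni i}\mu_S}{\sum_{S}\Delta(S)\mu_S}.
\end{equation*}
If $\sum_S\mu_S<1$, the residual probability is an idle one-slot action. Treating it as the empty set with $\Delta(\emptyset)=1$ keeps the formula consistent, and I would adopt this convention as in Appendix~\ref{proof:thm4.1}. The second term in \eqref{eq:J_R_thm} is therefore exactly $1/q_i$.

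Plugging $\mu^{LB}$ into \eqref{eq:J_R_thm} gives
\begin{equation*}
\mathbb{E}[J^{\pi_{LB}}]=\frac1N\sum_i w_i\left(\frac{\sum_S\Delta(S)^2\mu_S^{LB}}{2\sum_S\Delta(S)\mu_S^{LB}}+\frac{1}{q_i^{LB}}-\frac12\right).
\end{equation*}
The $1/q_i^{LB}$ terms are handled by writing
\begin{equation*}
\frac{1}{q_i^{LB}}=2\left(\frac{1}{2q_i^{LB}}-\frac12\right)+1 .
\end{equation*}
Regrouping then yields
\begin{equation*}
\mathbb{E}[J^{\pi_{LB}}]=2L_B+\frac1N\sum_i w_i\left(\frac{\sum_S\Delta(S)^2\mu_S^{LB}}{2\sum_S\Delta(S)\mu_S^{LB}}+\frac12\right)=2L_B+\Psi_{LB}.
\end{equation*}
Dividing by $L_B$ gives $\rho^{SR^*}\le \mathbb{E}[J^{\pi_{LB}}]/L_B = 2+\Psi_{LB}/L_B$.

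The main obstacle is justifying that the lower-bound optimizer is attained, or approached arbitrarily closely, by a stationary randomized policy $\pi_{LB}$. The paper asserts this, but I would argue it as follows. The long-run throughput vector of any causal policy lies in the closure of the set $\{(\sum_{S\ni i}\mu_S/\sum_S\Delta(S)\mu_S)_i\}$. This holds because the empirical frequencies of action choices, weighted by frame durations, define a randomization that achieves the same throughputs, since the throughput is a ratio of linear functionals in the action frequencies. Compactness of the simplex then gives attainment. I would also note $L_B>0$ whenever some $q_i<1$ with $w_i>0$, so the ratio is well defined.
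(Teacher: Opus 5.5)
Your proof is correct and takes the same route as the paper. You identify the second term of Theorem~\ref{thm:srp_ewsaoi} as $1/q_i$ with $q_i=\sum_{S\ni i}\mu_S/\sum_S\Delta(S)\mu_S$, evaluate at $\mu^{LB}$ to get $\mathbb{E}[J^{\pi_{LB}}]=2L_B+\Psi_{LB}$, and conclude from $\mathbb{E}[J^{SR^*}]\le\mathbb{E}[J^{\pi_{LB}}]$. Your regrouping $1/q_i=2(\tfrac{1}{2q_i}-\tfrac12)+1$ lands exactly on the stated $\Psi_{LB}$, more cleanly than the paper's intermediate display. The attainment sketch is extra, since the theorem already takes $\pi_{LB}$ as given; if you keep it, take $\mu_S$ to be the per-frame action frequencies rather than duration-weighted ones, since only the former reproduce the empirical throughputs.
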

\begin{proof}
See Appendix \ref{proof:thm4.2}.
\end{proof}
The stationary randomized policy $\pi_{LB}$ that solves the lower bound in \eqref{eq:LB_opt} assigns low scheduling probabilities $\mu_S^{LB}$ to sets with large values of $\Delta(S)$, which ensures that ${\Psi_{LB}}/{L_B}$ is a constant. In the extreme scenario of high interference where $\Delta(S) \rightarrow \infty$ for some sets $S$, the stationary randomized policy $R$ would assign corresponding scheduling probabilities $\mu_{S}^{LB}\rightarrow0$, ensuring that the optimality gap does not grow unbounded. Therefore, the optimality gap of the stationary randomized policy is bounded by a constant factor, independent of the network size $N$.

Theorem \ref{thm:opt_ratio} captures the impact of multi-AP interference on the optimality of the stationary randomized policies. In the absence of this interference, i.e., when $\Delta(S)=1$ for all $S \in \mathcal{I}$, the optimality ratio of the stationary randomized policies is upper-bounded by 2 plus a small constant, which recovers the known bounds in \cite{throughput} and \cite{aoi1} for single-AP settings.

While the optimal stationary randomized policy can be computed entirely offline and incurs little overhead during online deployment, its computation is hindered by scalability limitations. The number of decision variables and associated constraints scales as $\mathcal{O}(|\mathcal{N}_k|^K)$. As a result, solving \eqref{eq:J_R_opt} becomes computationally intractable for large-scale systems due to both computation and memory constraints. Moreover, the stationary randomized policy needs to recompute $\left\{\mu_{{S}}^*|S\in\mathcal{I}\right\}$ every time there is a change in the transmission times, making it non-robust to real-time deployment and mobile users.

\section{Lyapunov Policy}\label{section:mw}

In AoI optimization literature, authors typically establish an upper bound on the optimal stationary randomized policy, and then introduce either a Lyapunov drift based Max-Weight policy \cite{aoi1,correlated, LargeAndSmall} or a Whittle's index policy \cite{aoi1}. In these works, Lyapunov drift analysis is performed for equal slot or frame lengths. However, in our setting, we have frame lengths that vary over time and also depend on the scheduling actions. This makes standard Lyapunov analysis hard to apply directly. Furthermore, since the evolution of AoI across different users is coupled combinatorially, using a Whittle index approach is also not feasible - as it relies on decoupling across arms.

To overcome this, we propose a new class of Lyapunov drift based Max-Weight policies that can be utilized for varying frame lengths. A policy with similar structure has been proposed by Neely in \cite{neely}. However, the analysis in \cite{neely} is only applicable to renewal processes and throughput optimality. To the best of our knowledge, the analysis of our proposed Lyapunov policy is fundamentally novel, wherein we optimize AoI in a non-renewal system with variable frame lengths to establish bounds.

Our Max-Weight policy selects, at the beginning of each frame, the subset of users that minimize the ratio of the Lyapunov drift over the chosen frame length. In other words, the Max-Weight policy minimizes average drift across time slots. We define a quadratic Lyapunov function 
\begin{equation}
    L(t) = \sum_{i=1}^{N} w_iA_i^2(t).
\end{equation}
Recall that frame $b$ begins at time-slot $t_b$. We define the \textit{drift ratio} at the beginning of frame b as follows:

\begin{equation}
    \phi(S, \mathbf{A}(t_b)) = \mathbb{E} \left[ \frac{L(t_{b+1}) - L(t_b)}{t_{b+1} - t_b} \mathrel{\bigg|} \mathbf{A}(t_b), S \right],
    \label{eq:expected_drift_rate}
\end{equation}
where $\Delta(S) = t_{b+1} - t_b$. The Max-Weight policy, at the beginning of each frame, selects a subset of users that minimizes the ratio of Lyapunov drift across frames to the chosen frame length.
\begin{equation}
\label{eq:max_weight}
    \pi^{MW}(t_b) = \underset{S \in \mathcal{I}}{\arg\min} \,\,  \phi(S, \mathbf{A}(t_b))
\end{equation}
This minimization procedure can be simplified to the maximization below. The derivation details are provided in Appendix \ref{appendix:mw_derivation}.
\begin{equation}
\label{eqn:mw_policy}
\begin{split}
    \pi^{MW}(t_b) &= \underset{S \in \mathcal{I}}{\arg\max} \Bigg\{ \sum_{i \in S_b} w_i \left( 2 A_i(t_b) + \frac{A_i^2(t_b)}{\Delta(S)} \right) \\
    &\quad - \sum_{i=1}^N w_i \Delta(S) \Bigg\}.
\end{split}
\end{equation}

Here $\mathbf{A}(t_b)$ represents the age vector at the beginning of frame b. Notice that the decision taken at the beginning of each frame b, also impacts the frame duration. Further, setting $\Delta(S)=1, \forall S$ we recover the standard Max-Weight policy proposed in \cite{aoi1}.

The following theorem shows that the max-weight policy defined in (\ref{eqn:mw_policy}) is constant-factor optimal with respect to the lower bound defined in Section 3.
\begin{theorem}
\label{thm:mw_optimality_ratio}
Let $J^{MW}$ denote the long-term weighted average AoI under the Max-Weight policy. The optimality ratio $\rho^{MW} = \mathbb{E}[J^{MW}] / L_B$ satisfies:
\begin{equation}
\rho^{MW} \le \bigg(2 + \frac{\Psi_{LB}}{L_B}\bigg)\bigg(\sqrt{2 \bar{\Delta}_{SR}} + \frac{\Psi_{MW}}{L_B}\bigg),
\end{equation}
    where $\Psi_{LB}$ is defined in (\ref{eq:opt_sr}), $\bar{\Delta}_{SR} = \sum_S \Delta(S) \mu^*_S $ is the average frame length under the optimal stationary randomized policy, and $\Psi_{MW} = \frac{\sum_{i=1}^N w_i}{2N} \left( \frac{\bar{\mathbb{M}}[\Delta^2(\pi^{MW}(t_b))]}{\bar{\mathbb{M}}[\Delta(\pi^{MW}(t_b))]} - 1 \right)$.
\end{theorem}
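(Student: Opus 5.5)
The plan is the usual Max-Weight-versus-randomized comparison, carried out at the level of drift \emph{ratios}: the Max-Weight rule is benchmarked against the optimal stationary randomized policy $\pi^*_{SR}$ of Theorem \ref{thm:opt_ratio}, and the chain is then closed with $\mathbb{E}[J^{SR^*}]\le (2+\Psi_{LB}/L_B)L_B$. The first factor in the claimed bound comes from this last step. The second factor, $\sqrt{2\bar{\Delta}_{SR}}+\Psi_{MW}/L_B$, should therefore arise as an upper bound on $\mathbb{E}[J^{MW}]/\mathbb{E}[J^{SR^*}]$.

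\textbf{Step 1 (one-frame drift).} Since the rule \eqref{eq:max_weight} minimizes $\phi(S,\mathbf{A}(t_b))$ over $\mathcal{I}$, for every age vector we get $\phi(\pi^{MW}(t_b),\mathbf{A}(t_b))\le \sum_S \mu^*_S\,\phi(S,\mathbf{A}(t_b))$. This holds because a minimum lies below any convex combination. Using the closed form behind \eqref{eqn:mw_policy}, the randomized side equals
\[
\sum_i w_i\Big(2A_i(t_b)+\textstyle\sum_S \mu_S^*\Delta(S)\Big)-\sum_i w_i\sum_{S\ni i}\mu^*_S\Big(2A_i(t_b)+\frac{A_i^2(t_b)}{\Delta(S)}\Big),
\]
up to lower-order terms. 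The coefficient of $A_i^2$ is $-w_i\sum_{S\ni i}\mu_S^*/\Delta(S)$. By Cauchy--Schwarz/Jensen this is at most $-w_i (q_i^{SR})^2/\bar{\Delta}_{SR}$, where $q_i^{SR}=\sum_{S\ni i}\mu_S^*/\bar{\Delta}_{SR}\cdot$(normalization) is the per-slot delivery rate of node $i$ under $\pi^*_{SR}$, as in the proof of Theorem \ref{thm:srp_ewsaoi}.

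\textbf{Step 2 (telescoping with variable frames).} I would multiply the frame-$b$ ratio inequality by $\Delta(\pi^{MW}(t_b))$ and sum over frames up to horizon $T$. This is the delicate point. The drift ratio is a per-slot quantity, but the telescoping sum of $L$ is over frames, so the multiplier is the MW frame length. The right-hand side then involves $\sum_b \Delta_b A_i^2(t_b)$, which relates to the time integral of $A_i^2$, and $\sum_b \Delta_b A_i(t_b)$, which relates to $\sum_t A_i(t)$ minus a correction of order $\sum_b\Delta_b^2$. After dividing by $T$, the nonnegativity of $L$ gives an inequality of the form
\[
\sum_i w_i\frac{(q_i^{SR})^2}{\bar\Delta_{SR}}\,\overline{A_i^2}\;\le\; 2\sum_i w_i(1-\cdots)\,\overline{A_i}+\sum_i w_i\frac{\bar{\mathbb{M}}[\Delta^2(\pi^{MW})]}{\bar{\mathbb{M}}[\Delta(\pi^{MW})]}.
\]
Here overlines are time averages. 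The ratio $\bar{\mathbb{M}}[\Delta^2]/\bar{\mathbb{M}}[\Delta]$ is exactly what produces $\Psi_{MW}$. I expect this step, and in particular making the ``ratio-based'' telescoping rigorous (limits exist, renewal-reward-style conversion of frame sums to time averages for a non-renewal process), to be the main obstacle. My first attempt would be to work on sample paths with $\liminf/\limsup$ and to invoke the assumption that the MW frame-length moments converge.

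\textbf{Step 3 (closing the ratio).} Jensen gives $\overline{A_i^2}\ge \overline{A_i}^2$, and Cauchy--Schwarz over $i$ with weights $w_i$ then converts the quadratic inequality into a linear bound. The result is $\frac1N\sum_i w_i\overline{A_i}\le \sqrt{2\bar\Delta_{SR}}\cdot\frac1N\sum_i w_i/q_i^{SR}\cdot(\text{const})+\Psi_{MW}$. Comparing with Theorem \ref{thm:srp_ewsaoi}, $\frac1N\sum_i w_i/q_i^{SR}$ is at most about $\mathbb{E}[J^{SR^*}]$, and in turn at most $2L_B+\Psi_{LB}$ by Theorem \ref{thm:opt_ratio}. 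The additive $\Psi_{MW}$ term is bounded by $(\Psi_{MW}/L_B)\,\mathbb{E}[J^{SR^*}]$, using $\mathbb{E}[J^{SR^*}]\ge L_B$. Factoring out $\mathbb{E}[J^{SR^*}]$ and dividing by $L_B$ yields the product form $\big(2+\Psi_{LB}/L_B\big)\big(\sqrt{2\bar\Delta_{SR}}+\Psi_{MW}/L_B\big)$.
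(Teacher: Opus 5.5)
Your skeleton matches the paper's: benchmark Max-Weight against $\pi^*_{SR}$ through the drift ratio, telescope using $L\ge 0$, apply Jensen and Cauchy--Schwarz, and close with Theorem~\ref{thm:opt_ratio}. The gap is in Step~1. You use the wrong benchmark there, and the loss it causes is hidden in the ``(const)'' of Step~3.

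You compare with the average of per-frame ratios, $\sum_S\mu^*_S\,\phi(S,\mathbf{A})$. This puts $\Delta(S)$ inside the expectation and gives the quadratic coefficient $c_i=\sum_{S\ni i}\mu^*_S/\Delta(S)$. The main term of $\mathbb{E}[J^{SR^*}]$ is $\frac1N\sum_i w_i\bar\Delta_{SR}/\gamma_i=\frac1N\sum_i w_i/q_i^{SR}$, with $\gamma_i=\sum_{S\ni i}\mu^*_S$. To reach it you would need $c_i\ge\gamma_i/\bar\Delta_{SR}$, i.e.\ $\mathbb{E}_{\mu^*}[\mathbf{1}\{i\in S\}/\Delta(S)]\ge\mathbb{E}_{\mu^*}[\mathbf{1}\{i\in S\}]/\mathbb{E}_{\mu^*}[\Delta(S)]$. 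That is false whenever the sets containing $i$ are slower than average. For example, take $\{i,j\}$ with $\Delta=D$ w.p.\ $p$ and $\{j\}$ with $\Delta=1$ otherwise: then $c_i=p/D$, versus $\gamma_i/\bar\Delta_{SR}=p/(pD+1-p)$.

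In terms of $\gamma_i$ and $\bar\Delta_{SR}$, Cauchy--Schwarz gives only $c_i\ge\gamma_i^2/\bar\Delta_{SR}=\gamma_i q_i^{SR}$. Your $(q_i^{SR})^2/\bar\Delta_{SR}$, read with $q_i^{SR}=\gamma_i/\bar\Delta_{SR}$, is weaker still by a factor $\bar\Delta_{SR}^2$. The Cauchy--Schwarz constant in Step~3 is then $\sum_i w_i/c_i$, and it can only be bounded by $\sum_i w_i\bar\Delta_{SR}/\gamma_i^2$. That is a factor $1/\gamma_i$ above $\sum_i w_i/q_i^{SR}$, and $1/\gamma_i$ is at least $|\mathcal{N}_k|$ on average within cell $k$. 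So it is not a harmless constant, it is not in the theorem, and Theorem~\ref{thm:opt_ratio} cannot absorb it.

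The paper instead benchmarks against the ratio of expectations, $\phi(\pi^{MW}(t_b),\mathbf{A})\le\mathbb{E}_{\mu^*}[L(t_{b+1})-L(t_b)\mid\mathbf{A}]/\bar\Delta_{SR}$. This holds because $L(t_{b+1})-L(t_b)\ge\phi(\pi^{MW}(t_b),\mathbf{A})\,\Delta(S)$ for every $S$; it is the Neely-type inequality the paper invokes. The frame length now stays outside the expectation, so the quadratic coefficient is exactly $w_i\gamma_i/\bar\Delta_{SR}$. Using $\mathbb{E}[\Delta u_i]\ge\gamma_i$ in the linear term, completing the square about $\bar\Delta_{SR}/\gamma_i-1$, and applying Cauchy--Schwarz with weights $w_i\bar\Delta_{SR}/\gamma_i$ gives $C_1=\sum_i w_i\bar\Delta_{SR}/\gamma_i=NR_1$. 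That is exactly the main term of $\mathbb{E}[J^{SR^*}]$.

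Step~2 also needs restructuring. Your displayed inequality drops the benchmark's own constant $\sum_i w_i\bar\Delta_{SR}$ from Step~1. Replacing $\sum_b\Delta_b A_i^2(t_b)$, on the side you lower-bound, by $T\,\overline{A_i^2}$ also creates cross terms $\sum_b A_i(t_b)\Delta_b(\Delta_b-1)$ and cubic terms in $\Delta_b$. The ratio $\bar{\mathbb{M}}[\Delta^2]/\bar{\mathbb{M}}[\Delta]$ does not control these.

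The paper never passes to time averages inside the drift argument. It applies Cauchy--Schwarz across users within each frame, then Jensen to the frame-length-weighted average of the frame-start quantities. This bounds $L_1$, the frame-length-weighted average of $\frac1N\sum_i w_iA_i(t_b)$. It then picks up $\Psi_{MW}$ additively from the exact identity $\sum_{k=0}^{\Delta_b-1}(A_i(t_b)+k)=A_i(t_b)\Delta_b+\Delta_b(\Delta_b-1)/2$, i.e.\ $J^{MW}=L_1+\Psi_{MW}$. It finishes with $(L_1+\Psi_{MW})/(R_1+\Psi_{SR})\le L_1/R_1+\Psi_{MW}/L_B$.

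Even with both repairs, the chain gives $NL_1\le C_1+\sqrt{C_1G}$, where $G$ is the completed-square constant. The paper reaches exactly $\sqrt{2\bar\Delta_{SR}}\,R_1$ only by dropping the additive $C_1$ and treating $\overline{(\Delta^{SR})^2}$ as $\bar\Delta_{SR}^2$. So do not expect your constant to come out cleaner than that.
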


\begin{proof}
See Appendix \ref{proof:thm_mwopt}.
\end{proof}
While we derive a constant factor optimality above that is larger than the optimal stationary randomized policy, this is an artifact of our Lyapunov analysis. We will show through simulations in Section \ref{sec:simulations} that the Max-Weight policy significantly outperforms the optimal stationary randomized policy in practice.

\subsection{Approximating Max-Weight via Submodularity} \label{section:amw}

Finding the optimal set $\pi^{MW}(t_b)$ in (\ref{eqn:mw_policy}) involves a combinatorial search over $\prod_{k=1}^K (|\mathcal{N}_k| + 1)$ possible combinations. This search space grows exponentially with the number of APs $K$, making real-time execution on an SDN controller computationally expensive. However, this optimization can be performed efficiently if the problem exhibits specific structural properties. In the following discussion, we characterize the scheduling constraints as a partition matroid and assume that the set function being maximized in (\ref{eqn:mw_policy}) is submodular. This structure allows us to move away from exhaustive search in favor of efficient locally greedy approximations.

A set function $f(S)$ defined for all subsets $S$ of a finite ground set $\mathcal{N}$ is said to be submodular if it satisfies \cite{bertsimas}:
\begin{equation}
\label{eq:submodular}
    f(S \cup \{i, j\}) - f(S \cup \{i\}) \leq f(S \cup \{j\}) - f(S)
\end{equation}
Formally, this condition reflects the principle of diminishing marginal returns, implying that the `value' of activating a new node is higher when the existing set of active nodes is small.

In our scheduling context, the objective is to maximize the negative drift ratio $-\phi(S)$ at each frame $b$. Derived from the term minimized in \eqref{eq:max_weight}, this expression simplifies to:
\begin{equation}
\label{eq:aggregate_index}
\begin{split}
    -\phi(S) &= \sum_{i \in S} w_i \left( 2 A_i(t_b) + \frac{A_i^2(t_b)}{\Delta(S)} \right) \\
    &\quad - \sum_{k=1}^N w_k \big(\Delta(S) + 2A_i(t_b)\big).
\end{split}
\end{equation}
The decision set $S$ must respect the physical constraints of the network, which are equivalent to describing a partition matroid $\mathcal{I}$. Let the ground set $\mathcal{N}$ be partitioned into disjoint sets $\mathcal{N}_1, \dots, \mathcal{N}_K$, where $\mathcal{N}_k$ represents nodes associated with AP $k$. The feasible set of scheduling actions is $\mathcal{I} = \{ S \subseteq \mathcal{N} : |S \cap \mathcal{N}_k| \leq 1, \forall k \}$.

The submodular behavior of the negative drift ratio $-\phi(S)$ is motivated by the physical nature of interference. Submodularity represents diminishing returns. Adding a user to a small transmission set provides more value than adding one to a large, already crowded set. As more nodes transmit, interference grows and the frame duration $\Delta(S)$ increases. This reduces the marginal benefit of each additional update. Based on this intuition, we formalize the structural nature of our objective function:
\begin{assumption}
\label{asm:submodular_logic}
The negative drift ratio $-\phi(S)$ is a possibly non-monotonic submodular set function.
\end{assumption}

Under this assumption, we can use an approximate local search procedure from \cite{submodular} to find a scheduling decision in polynomial time.

\begin{algorithm}[h]
\caption{Iterative Local Search for Multi-AP Scheduling}
\label{alg:combined_scheduling}
\begin{algorithmic}[1]
\STATE \textbf{Input:} Set of users $\mathcal{N}$, interference constraint set $\mathcal{I}$, parameter $\epsilon$, and objective $-\phi(\cdot)$.
\STATE Initialize $V_1 \leftarrow \mathcal{N}$ and candidate collection ${S} \leftarrow \emptyset$.
\FOR{$m = 1, 2$}
    \STATE Find $v \leftarrow \arg \max \{ -\phi(\{u\}) \mid u \in V_m \}$ and set $S_m \leftarrow \{v\}$.
    
    \WHILE{a local improvement exists}
        \STATE \textbf{Delete:} If $\exists i \in S_m$ s.t. $-\phi(S_m \setminus \{i\}) \geq (1 + \frac{\epsilon}{N^4}) (-\phi(S_m))$:
        \STATE \quad $S_m \leftarrow S_m \setminus \{i\}$.
        
        \STATE \textbf{Exchange:} If $\exists j \in V_m \setminus S_m$ and $i \in S_m \cup \{\emptyset\}$ s.t. 
        \STATE \quad $(S_m \setminus \{i\}) \cup \{j\} \in \mathcal{I}$ and 
        \STATE \quad $-\phi((S_m \setminus \{i\}) \cup \{j\}) \geq (1 + \frac{\epsilon}{N^4}) (-\phi(S_m))$:
        \STATE \quad $S_m \leftarrow (S_m \setminus \{i\}) \cup \{j\}$.
    \ENDWHILE
    
    \STATE ${S} \leftarrow {S} \cup \{S_m\}$ and update ground set: $V_{m+1} \leftarrow V_m \setminus S_m$.
\ENDFOR
\RETURN $\pi^{AMW}(t_b) = S$.
\end{algorithmic}
\end{algorithm}

The algorithm executes two distinct passes, initializing each iteration with the highest-valued individual node from the current available ground set. Within each pass, the solution is refined through \textbf{Delete}, \textbf{Add}, and \textbf{Exchange} operations. The Delete operation addresses the possibly non-monotone nature of $-\phi(S)$ by removing nodes whose interference penalty outweighs their freshness benefit. The Add operation incorporates new nodes that improve the objective while respecting the matroid constraint. The Exchange operation swaps current nodes for superior candidates while maintaining the one-node-per-AP constraint. This iterative search is bounded by $\mathcal{O}(\frac{1}{\epsilon} N^4 \log N)$ local operations, ensuring a polynomial runtime. Consequently, the AMW policy remains computationally efficient for real-time scheduling, avoiding the exponential complexity of exact combinatorial search.

\begin{lemma}
\label{lemma:frame_based_optimality}
At any scheduling frame $b$ with the same AoI values, the approximate set $\pi^{AMW}(t_b)$ identified by the iterative local search algorithm satisfies:
\begin{equation}
\begin{split}
    &\frac{L^{AMW}(t_{b+1}) - L^{AMW}(t_{b})}{\Delta(\pi^{AMW}(t_b))} \\
    &\quad \leq (4+\epsilon) \frac{L^{MW}(t_{b+1}) - L^{MW}(t_{b})}{\Delta(\pi^{MW}(t_b))},
\end{split}
\end{equation}
where $\pi^{MW}(t_b)$ is the Max-Weight subset defined in \eqref{eqn:mw_policy}.
\end{lemma}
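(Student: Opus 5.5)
The proof reduces the statement to the known approximation guarantee of the local-search algorithm of \cite{submodular} for non-monotone submodular maximization over a matroid. The key observation is that the drift ratio appearing on both sides is exactly $\phi(S,\mathbf{A}(t_b))$ evaluated at the chosen set. Since the frame-based model is deterministic given $\mathbf{A}(t_b)$ and $S$, we have
\[
\frac{L(t_{b+1})-L(t_b)}{\Delta(S)} = \phi(S,\mathbf{A}(t_b)).
\]
Fixing the common age vector $\mathbf{A}(t_b)$, the claim is therefore equivalent to
\[
\phi(\pi^{AMW}(t_b)) \le (4+\epsilon)\,\phi(\pi^{MW}(t_b)),
\]
that is, $-\phi(\pi^{AMW}) \ge \frac{1}{4+\epsilon}\,(-\phi(\pi^{MW}))$ after multiplying by $-1$. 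By \eqref{eq:max_weight}, $\pi^{MW}(t_b)$ maximizes $f(S):=-\phi(S)$ over $\mathcal{I}$, so $f(\pi^{MW})=\mathrm{OPT}$.

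The steps, in order, are as follows.

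First, I verify that $\mathcal{I}$ is a partition matroid with blocks $\mathcal{N}_k$ and capacity one each. This is immediate: $\mathcal{I}$ is downward closed, and the exchange property holds blockwise.

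Second, I invoke Assumption~\ref{asm:submodular_logic}, which says $f$ is submodular and possibly non-monotone. I also need $f(\emptyset)\ge 0$, or more precisely $\mathrm{OPT}>0$ together with a normalization. Here some care is required. The drift of the empty set is $\sum_i w_i(2A_i+1)>0$ per slot, so $f(\emptyset)<0$. I would therefore apply the guarantee to the shifted function $g(S)=f(S)-f(\emptyset)$, which is still submodular and nonnegative on the relevant sets. Because the comparison in the lemma is between drifts, the shift has to be tracked explicitly. The cleanest route is to state the result for $g$ and note that the paper's convention implicitly measures drift relative to the idle baseline. This normalization is the step I expect to be the main obstacle: the multiplicative factor $(4+\epsilon)$ only transfers cleanly when the objective is nonnegative, so I would make this assumption explicit as part of reading Assumption~\ref{asm:submodular_logic}.

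Third, I apply Lee--Mirrokni--Nagarajan--Sviridenko's theorem (\cite{submodular}) for a single matroid, $k=1$. Each pass of Algorithm~\ref{alg:combined_scheduling} produces an approximate local optimum $S_m$ under the $(1+\epsilon/N^4)$-improvement rule. Their local-optimality lemma gives
\[
(1+\epsilon')\,f(S_m) \ge f(S_m\cup C)+f(S_m\cap C)
\]
for every $C\in\mathcal{I}$. Applying this with $C=\pi^{MW}\cap V_m$ for $m=1,2$ and summing, submodularity yields
\[
f(S_1\cup C_1)+f(S_2\cup C_2)\ge f(\pi^{MW}).
\]
This uses the fact that $S_1$ and $S_2$ are disjoint, since $V_2=V_1\setminus S_1$. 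Together with nonnegativity, this gives
\[
(2+\epsilon')\bigl(f(S_1)+f(S_2)\bigr)\ge \mathrm{OPT},
\]
so the better of the two candidates, which is what the algorithm returns as $\pi^{AMW}$, satisfies $f(\pi^{AMW})\ge \mathrm{OPT}/(4+\epsilon)$ after rescaling $\epsilon$.

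Finally, I translate back through the identity between the drift ratio and $\phi$ to obtain the stated inequality. I also note that the number of improvement steps is $\mathcal{O}(\frac1\epsilon N^4\log N)$, because each step multiplies $f$ by at least $(1+\epsilon/N^4)$, starting from the best singleton value, which is at least $\mathrm{OPT}/N$.
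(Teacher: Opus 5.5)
\paragraph{Verdict.} The paper's proof is only the remark that the lemma is an application of \cite{submodular}. Your overall route is therefore the intended one: a partition matroid, two passes of local search, and the single-matroid $1/(4+\epsilon)$ bound. The identity between the normalized drift and $\phi(S,\mathbf{A}(t_b))$ is also fine. The gap is the step you call an equivalence.

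\paragraph{The equivalence step is false.} Multiplying $\phi(\pi^{AMW})\le(4+\epsilon)\,\phi(\pi^{MW})$ by $-1$ gives $f(\pi^{AMW})\ge(4+\epsilon)\,f(\pi^{MW})$, not $f(\pi^{AMW})\ge f(\pi^{MW})/(4+\epsilon)$. The sign flips, but the constant does not invert. The local-search guarantee $f(\pi^{AMW})\ge\mathrm{OPT}/(4+\epsilon)$ therefore translates to $\phi(\pi^{AMW})\le\phi(\pi^{MW})/(4+\epsilon)$, which is a different inequality from the stated one.

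Worse, no argument can give the stated inequality where the submodular theorem has content:
\begin{itemize}
\item If $\mathrm{OPT}>0$, i.e.\ $\phi(\pi^{MW})<0$: since $\pi^{MW}$ minimizes $\phi$ over $\mathcal{I}$ and $\pi^{AMW}\in\mathcal{I}$, we get $\phi(\pi^{AMW})\ge\phi(\pi^{MW})>(4+\epsilon)\,\phi(\pi^{MW})$. So the displayed inequality is false, and the paper's one-line citation does not address this either.
\item If $\phi(\pi^{MW})\ge0$: then $f\le0$ on all of $\mathcal{I}$, and the multiplicative analysis, which needs a nonnegative objective, says nothing.
\end{itemize}
Your last step (``translate back through the identity'') thus does not yield the statement. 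At best you get the version with $1/(4+\epsilon)$ in place of $4+\epsilon$, under hypotheses that still need securing.

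\paragraph{The proposed shift does not secure those hypotheses.} Applying the theorem to $g=f-f(\emptyset)$, with an idle frame of unit length as in your computation, gives $g(\pi^{AMW})\ge g(\pi^{MW})/(4+\epsilon)$, i.e.
\[
\phi(\pi^{AMW})\le\tfrac{1}{4+\epsilon}\,\phi(\pi^{MW})+\tfrac{3+\epsilon}{4+\epsilon}\,\phi(\emptyset),\qquad \phi(\emptyset)=\textstyle\sum_i w_i\bigl(2A_i(t_b)+1\bigr)>0 .
\]
The additive term does not cancel. Declaring that drift is ``measured relative to the idle baseline'' changes the statement rather than proving it. Two further mismatches:
\begin{itemize}
\item Nonnegativity fails for $g$. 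The analysis of \cite{submodular} uses nonnegativity on arbitrary subsets, since it drops terms like $f(S_1\cup S_2\cup C)$ and $f(S_2\cap C_2)$. But already $g(\{i\})=w_i\bigl(2A_i+A_i^2/\Delta(\{i\})\bigr)-(\Delta(\{i\})-1)\sum_j w_j$ can be negative.
\item The algorithm's tests are not shift-invariant. Algorithm~\ref{alg:combined_scheduling} applies its multiplicative $(1+\epsilon/N^4)$ tests to $-\phi$ itself. When $-\phi(S_m)<0$, the test accepts moves that decrease the objective. So the approximate local-optimality inequality is not available for the algorithm as written, for either $f$ or $g$.
\end{itemize}

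\paragraph{Smaller slips in Step 3.} For one matroid, the local-optimality inequality carries a factor $2$, roughly $2(1+\epsilon')f(S)\ge f(S\cup C)+f(S\cap C)$. Also, submodularity with $S_1\cap S_2=\emptyset$ only gives $f(S_1\cup C_1)+f(S_2\cup C_2)\ge f(\pi^{MW}\setminus S_1)$, after dropping a nonnegative term. You need the $f(S_1\cap C_1)$ term from the first pass, together with nonnegativity, to reach $f(\pi^{MW})$.
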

\begin{proof}
\label{proof: approx_algo} The proof is an application of results from \cite{submodular}. 
\end{proof}

Further, the per-frame approximation guarantees immediately translate to AoI performance guarantee for the entire time horizon.
\begin{corollary}
    \label{corollary: approx_optimality} Let $J^{AMW}$ denote the long-term weighted average AoI under the max-weight policy solved by the approximation algorithm. The optimality ratio $\rho^{AMW} = \mathbb{E}[J^{AMW}] / L_B$ satisfies:
\begin{equation}
\rho^{AMW} \leq \bigg(2 + \frac{\Psi_{LB}}{L_B}\bigg)\bigg((4+\epsilon) \cdot\sqrt{2 \bar{\Delta}_{SR}} + \frac{\Psi_{MW}}{L_B}\bigg).
\label{eq:corollary}
\end{equation} 
\end{corollary}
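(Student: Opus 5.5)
The plan is to re-run the argument behind Theorem \ref{thm:mw_optimality_ratio} with one change: the per-frame comparison between the Max-Weight action and the optimal stationary randomized action gets weakened by the approximation factor from Lemma \ref{lemma:frame_based_optimality}. I expect the proof of Theorem \ref{thm:mw_optimality_ratio} in Appendix \ref{proof:thm_mwopt} to have three parts. First, summing the drift ratio over frames, weighting each frame by its length $\Delta(S_b)$, and telescoping $L(t)$ to get a bound on time-averaged quantities. Second, a frame-wise inequality stating that the Max-Weight drift ratio is at most the expected drift ratio of $\pi^*_{SR}$ at the same age vector. Third, using the closed form of Theorem \ref{thm:srp_ewsaoi} together with Theorem \ref{thm:opt_ratio}, then Cauchy--Schwarz/Jensen, to convert the resulting bound on $\bar{\mathbb{M}}[\sum_i w_i A_i]$ into the factor $\sqrt{2\bar{\Delta}_{SR}}$. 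The additive $\Psi_{MW}/L_B$ term comes from the $\bar{\mathbb{M}}[\Delta^2]/\bar{\mathbb{M}}[\Delta]$ correction that appears when frame-level averages are converted into slot-level AoI averages, in the same way as in Lemma \ref{lemma:J_pi}.

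\textbf{Step 1.} At each frame $b$, Lemma \ref{lemma:frame_based_optimality} applied at the common age vector $\mathbf{A}(t_b)$ gives
\[
\phi(\pi^{AMW}(t_b),\mathbf{A}(t_b)) \le (4+\epsilon)\,\phi(\pi^{MW}(t_b),\mathbf{A}(t_b)) \le (4+\epsilon)\,\mathbb{E}_{S\sim\mu^*}\big[\phi(S,\mathbf{A}(t_b))\big].
\]
The second inequality is the Max-Weight minimality used in Theorem \ref{thm:mw_optimality_ratio}. \textbf{Step 2.} Substitute this into the telescoped drift inequality along the AMW sample path. The right-hand side is the stationary-randomized drift bound, which is linear in the weighted ages and in $\bar{\Delta}_{SR}$. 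It is now multiplied by $(4+\epsilon)$, and the rest of the chain is unchanged. \textbf{Step 3.} Carry the factor through the square-root step. The term that produced $\sqrt{2\bar{\Delta}_{SR}}$ becomes $(4+\epsilon)\sqrt{2\bar{\Delta}_{SR}}$. Because the frame-length correction does not pass through the drift comparison, $\Psi_{MW}$ keeps the same form, evaluated on the AMW frame lengths (or bounded by the MW one, as the statement is written). \textbf{Step 4.} Multiply by $\mathbb{E}[J^{SR^*}]/L_B \le 2+\Psi_{LB}/L_B$ from Theorem \ref{thm:opt_ratio}, which yields \eqref{eq:corollary}.

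\textbf{Main obstacle.} Drift ratios are negative when the policy is doing well, so a multiplicative bound such as $\phi^{AMW}\le(4+\epsilon)\phi^{MW}$ can point the wrong way. The approximation guarantee really concerns the maximized objective $-\phi$: it says $-\phi^{AMW}\ge -\phi^{MW}/(4+\epsilon)$. My first approach is to split $-\phi$, as in \eqref{eq:aggregate_index}, into a state-independent offset and a nonnegative "reward" part. I would then apply the approximation only to the reward part, so that the factor multiplies the age-dependent term that ultimately controls $\sum_i w_i A_i$. Some care is needed because the offset also contains age terms. If this sign bookkeeping does not close cleanly, I would fall back on taking the statement of Lemma \ref{lemma:frame_based_optimality} at face value and propagating it formally. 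That is what the paper's one-line justification ("immediately translate") suggests.
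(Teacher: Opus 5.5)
Your plan is the paper's proof. It substitutes Lemma~\ref{lemma:frame_based_optimality} into the MW-versus-SRP comparison \eqref{eq:drift_bound}, asserts $L_1\le(4+\epsilon)\sqrt{2\bar{\Delta}_{SR}}\,R_1$, reuses $\Psi_{MW}$ without comment, and multiplies by $2+\Psi_{LB}/L_B$. Three remarks follow.

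First, the paper's comparison is against the ratio of expectations $\mathbb{E}[L^{SR}(t_{b+1})-L^{SR}(t_b)\mid\bar{A}(t_b)]/\bar{\Delta}_{SR}$, which is what its SRP drift bound controls. Your Step 1 uses $\mathbb{E}_{S\sim\mu^*}[\phi(S,\mathbf{A}(t_b))]$ instead, so switch to the ratio of expectations; MW minimality gives that inequality too, since $\Delta(S)>0$. Second, for $\Psi_{MW}$ your first reading, the statistic of the AMW frame lengths, is the defensible one; nothing shows it is bounded by the MW value.

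Third, your sign concern is well founded, and the paper does not resolve it. By definition of $\pi^{MW}$, $\phi(\pi^{MW}(t_b),\mathbf{A}(t_b))\le\phi(\pi^{AMW}(t_b),\mathbf{A}(t_b))$. Hence the inequality of Lemma~\ref{lemma:frame_based_optimality} cannot hold in any frame where the MW drift ratio is negative. The paper nonetheless propagates it formally, exactly as in your fallback.
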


\begin{proof}
    See Appendix \ref{proof: approx_algo}
\end{proof}

Note that while submodularity is an intuitive property for our system to hold, proving it is tricky even for the simple SINR-based model discussed in Section~\ref{sec:model}. We will show via simulations in Section \ref{sec:simulations} that the Approximate Max-Weight (AMW) policy performs almost as well as the Max-Weight policy that executes full combinatorial search in each frame, demonstrating the practical applicability of our approach and providing numerical evidence for our assumption.

\section{Simulation Results}
\label{sec:simulations}

In this section, we evaluate the performance of scheduling policies in terms of their long-term weighted average AoI. Specifically, we compare the following policies: the multi-AP optimal stationary randomized policy proposed in Section \ref{section:srp} (SRP), the multi-AP Max-Weight policy proposed in Section \ref{section:mw} (MW), the multi-AP Approximate Max-Weight policy proposed in Section \ref{section:amw} (AMW), and the single-AP Max-Weight policy proposed in \cite{aoi1}, which we execute in a distributed manner across APs. 

We use the single-AP Max-Weight policy as our baseline, as it has been shown to achieve near-optimal performance in single-hop AoI settings both theoretically and in practice \cite{uav1}. We note that the stationary randomized policy in \cite{interference} is derived under a simplified binary interference model with updates that finish in a single time-slot. Thus, it does not capture concurrent transmissions and non-uniform frame lengths possible in our model and cannot be directly utilized for our setting. 

We simulate a multi-AP wireless network consisting of $N$ users and $K$ APs. Each user is assigned to the geographically closest AP. All users transmit updates of size 100 KB at a transmit power level of 10 mW, over a 22 MHz channel (which is standard in 2.4 GHz WiFi). Wireless links are modeled using an SINR-based model with unit fading gain, and a noise density of -174 dBm/Hz. Under this setting, when the path-loss exponent is 2.5, a user located 5 m from its associated AP requires about 1.2 ms to complete a transmission. We run the simulation over a duration of 1 s and report the average AoI in milliseconds. Since there is randomness due to user locations, we repeat each experiment with 10 different random seeds and report the average across 10 runs along with error bars. 

The multi-AP policies make scheduling decisions synchronously across the network, whereas the single-AP Max-Weight policy operates locally and asynchronously at each AP. For all policies, the AoI of each node is reset immediately upon packet completion, independent of other ongoing transmissions, \textit{thereby capturing the asynchronous nature of packet completions in practical systems.}
\begin{figure}[ht]
    \centering
    \includegraphics[width=0.4\textwidth]{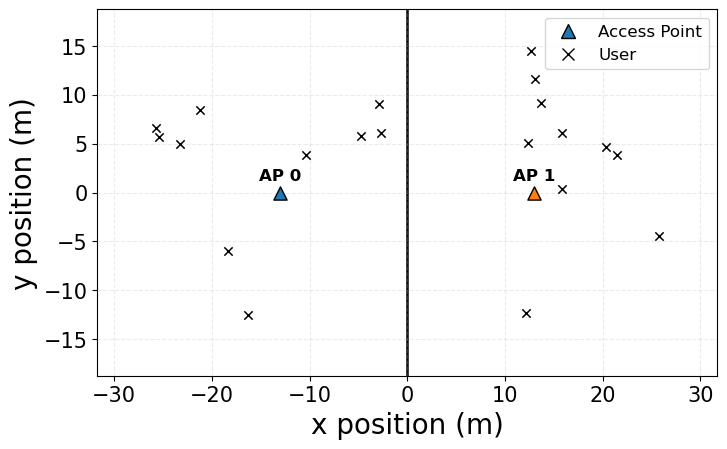}
    \caption{Example of the multi-AP WLAN deployment used in the simulations with $K = 2$ APs and $N=20$ users. The interference between the 2 APs is modeled using the spectrum overlap factor $\eta \in (0,1)$. }
    \label{fig:2_AP_hex_layout}
\end{figure}
For our first set of experiments, we focus on a network with two APs and plot performance for two cases - when there are 10 users in the system and when there are 20 users in the system (see Figure \ref{fig:2_AP_hex_layout}). In each case, the users are equally split between the 2 APs. The two APs are separated by a distance of $15 m$, and users are uniformly distributed within a $15 m$ radius of their associated AP. The path-loss exponent is set to 3.5 for this set of experiments. 

We are interested in measuring how the performance of different policies evolves as a function of the spectrum overlap $\eta$ between the 2 APs. For zero overlap, we expect that distributed single-AP Max-Weight should perform well. However, as the overlap increases, the amount of interference between the two sets of users should also increase, leading to performance gains for coordinated scheduling policies. Figure \ref{fig:aoi_overlap_sweep} confirms this observation. When $\eta = 0$, the distributed single-AP Max-Weight policy achieves the same performance as the coordinated multi-AP Max-Weight policy and outperforms the multi-AP optimal SRP. As interference ($\eta$) increases, performance of both the multi-AP SRP and the multi-AP Max-Weight policy remains flat, with the multi-AP max-weight policy staying close to the lower bound, while the average AoI of single-AP Max Weight grows worse. At $\eta=0.5$, our MW policy outperforms the single-AP policy by about $\mathbf{50\%}$.

Figure \ref{fig:aoi_overlap_sweep} also shows that the closed-form AoI of the multi-AP optimal SRP derived under the synchronous packet completion assumption closely matches the numerical simulation results where packet completions are asynchronous. This supports our claim that the AoI analysis and policies derived under the synchronous packet completions assumption still generalize well under asynchronous packet completions.
\begin{figure}[ht]
    \centering
    \includegraphics[width=0.48\textwidth]{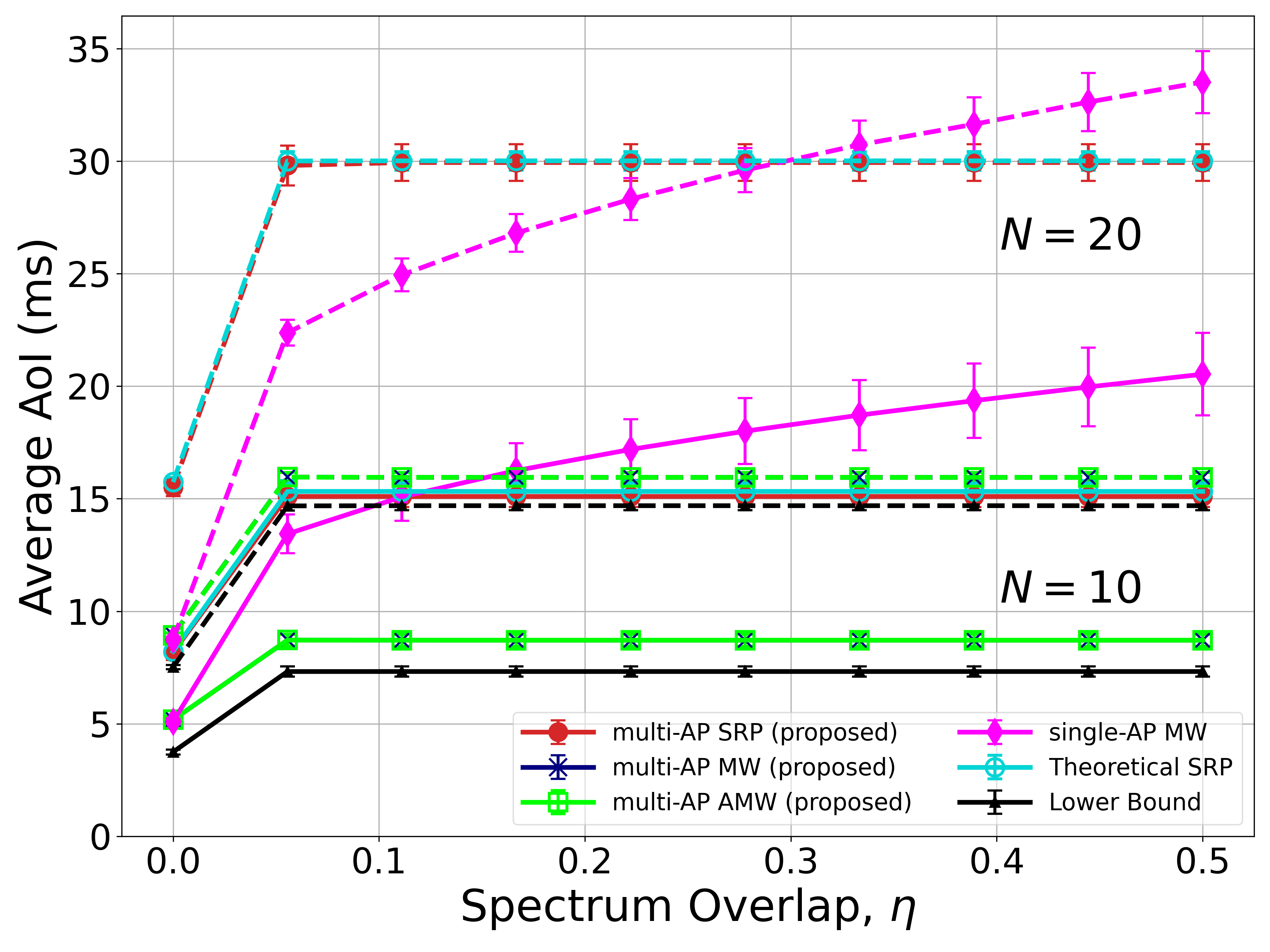}
    \caption{Plot of the average AoI for the 2-AP network as a function of the spectrum overlap factor $\eta$, evaluated at 10 uniformly spaced values in $[0,0.5]$. Solid and dashed lines correspond to $N=10$ and $N=20$, respectively.}
    \label{fig:aoi_overlap_sweep}
\end{figure}

Next, we consider settings with more than 2 APs. Figure \ref{fig:layout_fig} describes the layout for a 9 AP wireless network deployed over a hexagonal grid, operating in the 2.4 GHz WiFi band. Each AP is located at the center of a hexagonal cell of radius $7m$ and serves $5$ users uniformly distributed within its coverage area. Thus, $K=9$ and $N=45$ for the network layout drawn in Figure \ref{fig:layout_fig}. A 3-channel reuse pattern of the 3 orthogonal channels $\{1,6,11\}$ in the 2.4\,GHz Wi-Fi band is assigned to the APs such that adjacent cells operate on orthogonal channels, which is standard for 2.4GHz WiFi deployment in the US. As a result, there is no interference from neighboring APs, and co-channel interference arises from spatially separated cells that reuse the same channel. While we have chosen an orthogonal channel assignment scheme, our framework can easily handle non-orthogonal channel assignments as illustrated by the 2 AP experiments. 
\begin{figure}[ht]
    \centering
    \includegraphics[width=0.48\textwidth]{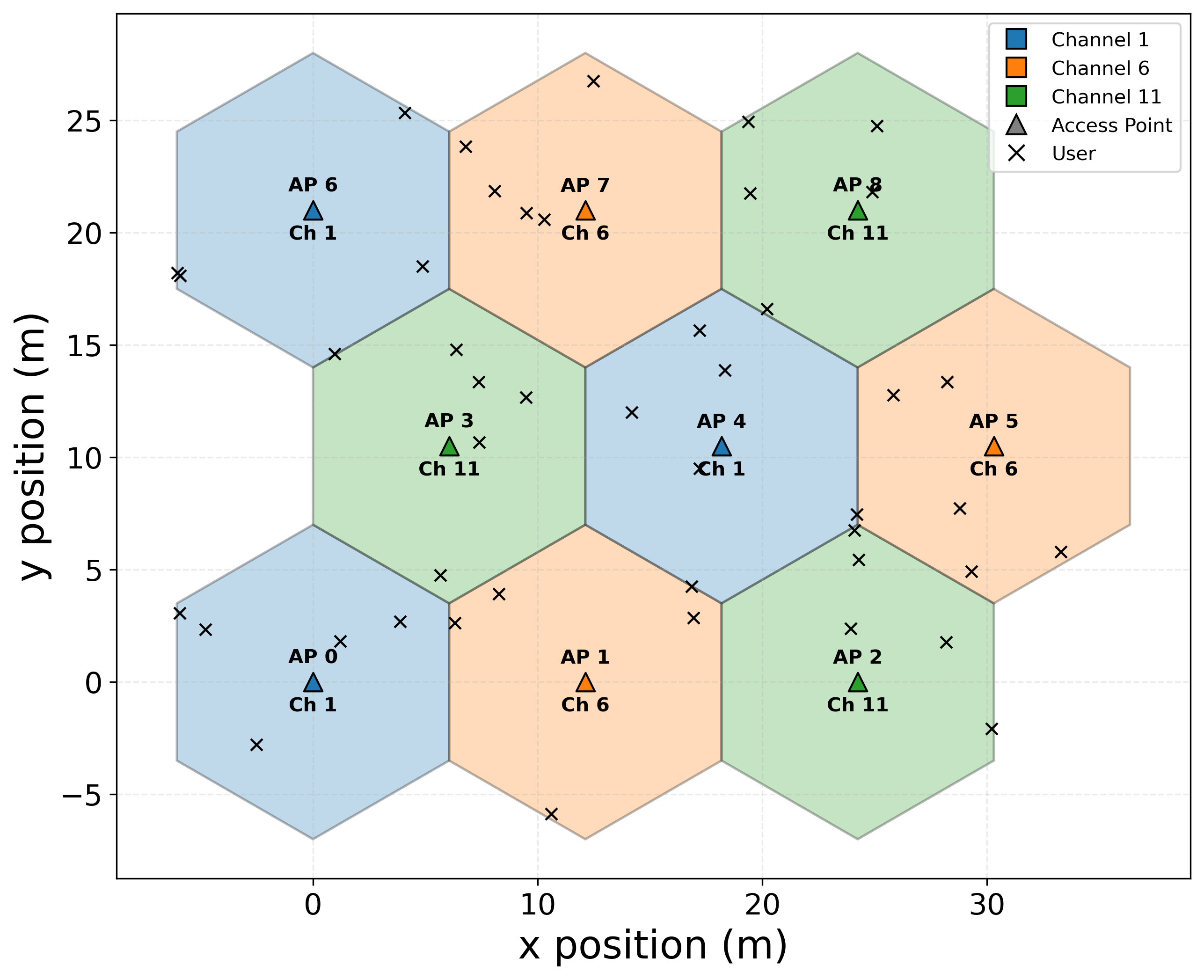}
    \caption{Example of the multi-AP WLAN deployment used in the simulations with $K=9$ APs and $N=45$ users. APs are arranged on a regular hexagonal grid with cell radius $R=7\,\mathrm{m}$, and 5 users are distributed uniformly within each cell. A 3-channel reuse pattern $\{1,6,11\}$ in the 2.4\,GHz Wi-Fi band is assigned to the APs.}
    \label{fig:layout_fig}
\end{figure}

We evaluate the scaling performance of the proposed policies in Figure \ref{fig:aoi_compare_fig}. Our centralized policies show a clear advantage over the single-AP MW baseline, with the multi-AP MW policy nearly achieving the lower bound. The AMW variant closely matches the performance of the multi-AP MW policy but offers a significant computational speedup. Consider the $K=9$ case. The average AoI achieved by the uncoordinated baseline policy is 27ms, while the AMW policy achieves an AoI of 14ms. This is an AoI reduction of nearly $\mathbf{48\%}$ compared to the uncoordinated distributed approach. 

To verify our theoretical results, we evaluate the performance guarantees of our proposed policies in Table~\ref{tab:theoretical_bounds}. The guarantee for AMW can also be constructed from the $\rho^{MW}$ term. Crucially, the constants decrease with network size, suggesting that our theoretical guarantees remain well-behaved even in larger WLANs. Furthermore, as seen in Figure~\ref{fig:aoi_compare_fig}, the empirical performance ratio between the lower bound and the MW policy is $\approx$ 1.1. This is significantly tighter than the theoretical bound of $\approx$ 4.8 shown in Table~\ref{tab:theoretical_bounds}. This discrepancy is an artifact of the worst-case nature of multi-slot Lyapunov drift analysis and is commonly observed in AoI optimization literature \cite{aoi1, throughput, LargeAndSmall}.

\begin{table}[ht]
\centering
\caption{{Evaluation of theoretical bounds $\rho^{SR^*}$ and $\rho^{MW}$ under uniform and random weights.}}
\label{tab:theoretical_bounds}
\resizebox{\columnwidth}{!}{%
\begin{tabular}{c|cc|cc}
\hline
& \multicolumn{2}{c|}{$\rho^{SR^*}$} & \multicolumn{2}{c}{$\rho^{MW}$} \\
\# APs & $w_i=1$ & $w_i\in[1,10]$ & $w_i=1$ & $w_i\in[1,10]$ \\
\hline
3 & 2.201 & 2.193 & 5.153 & 5.124 \\
5 & 2.112 & 2.103 & 4.772 & 4.751 \\
7 & 2.074 & 2.072 & 4.627 & 4.621 \\
9 & 2.067 & 2.102 & 4.583 & 4.618 \\
\hline
\end{tabular}%
}
\end{table}

Note that we have only reported the average AoI for up to $9$ APs and $45$ users for the lower bound and the optimal SRP policy, since we hit memory and computation constraints for larger network sizes. This also highlights the exponential computational complexity associated with the SRP policy as the number of APs increase. The AMW policy, however can easily be run for larger networks due to its polynomial complexity. 
\begin{figure}[ht]
    \centering
    \includegraphics[width=0.48\textwidth]{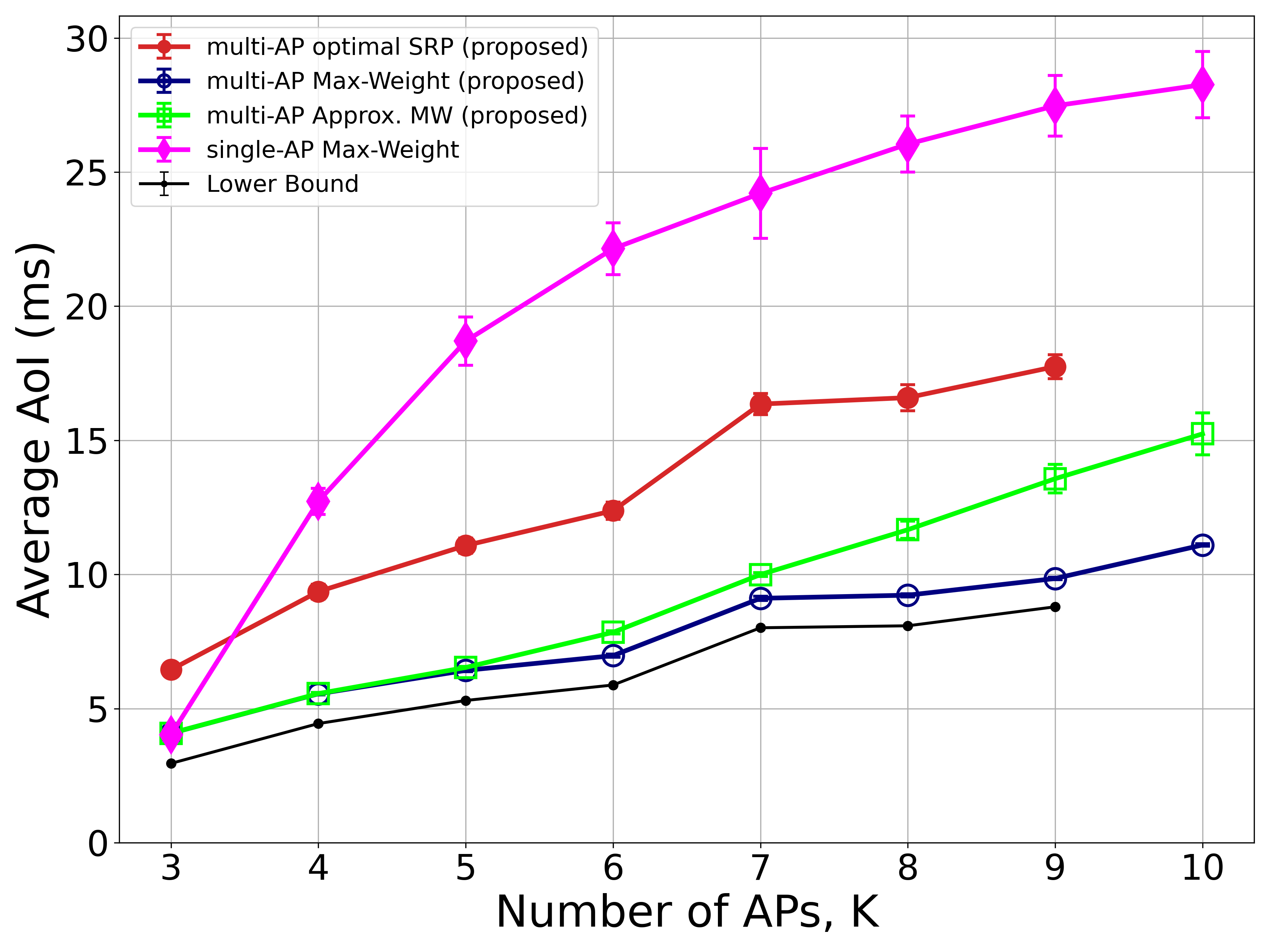}
    \caption{Plot of the average AoI as a function of the number of access points, $K$.}
    \label{fig:aoi_compare_fig}
\end{figure}

The computational overhead for our three proposed policies is reported in Tables \ref{tab:srp_offline_time} and \ref{tab:decision_time_comparison}. The optimal SRP requires extensive offline computation, as seen in Table \ref{tab:srp_offline_time}, but it can be executed in real-time with negligible overhead once the optimal probabilities are determined. More importantly, it needs frequent recomputation every time the locations of users or channel assignments or the wireless environment changes. This makes the policy non-robust and impractical for deployment due to the exponentially growing offline computation times and memory requirements.

\begin{table}[ht]
\centering
\caption{Offline computation time for the optimal stationary randomized policy with an increasing number of nodes ($N$) and APs ($K$).}
\label{tab:srp_offline_time}
\begin{tabular}{cc}
\hline
\textbf{$(N,K)$} & \textbf{Total Offline Compute Time (s)} \\
\hline
(15,3)  & 0.80 \\ 
(30,6)  & 1.26 \\
(45,9)  & 361.71 \\
\hline
\end{tabular}
\end{table}
For the multi-AP MW and AMW policies, we report the average computation time required to solve the maximization step in each scheduling frame in Table \ref{tab:decision_time_comparison}. The max-weight policy, similar to the SRP, has an exponential increase in decision times as N and K increase. The empirical decision times confirm our theoretical analysis of exponential increase. The AMW policy, on the other hand, is very efficient, and as shown in Table \ref{tab:decision_time_comparison}, the overhead increases only marginally with N and K. Coupled with the fact that it also performs close to the MW policy in terms of AoI, AMW is the best candidate for implementation in real-time without compromising on performance. Further, in a large scale network, we can use multiple controllers for scheduling, where each controller only coordinates in smaller groups of 5-10 APs, increasing scalability.

Finally, one aspect that we have not accounted for in our simulations is the added latency overhead due to centralized SDN coordination and signaling. Given the significant performance gains in freshness and the relative low cost of executing the AMW policy, we believe that multi-AP coordination can be a very useful approach for dense WLANs, especially for large update sizes, even with the added SDN control overheads.

\begin{table}[ht]
\centering
\caption{Comparison of average decision time per frame for MW and AMW policies with an increasing number of nodes ($N$) and APs ($K$)}
\label{tab:decision_time_comparison}
\begin{tabular}{ccc}
\hline
\textbf{$(N,K)$} & \textbf{MW Policy (ms)} & \textbf{AMW Policy (ms)} \\
\hline
(15,3)  & 0.331     & 0.070 \\ 
(30,6)  & 8.153     & 0.076 \\
(45,9)  & 1353.300  & 0.085 \\
\hline
\end{tabular}
\end{table}

\section{Conclusion}

This work addresses the challenge of optimizing information freshness in dense, multi-AP WLANs where inter-cell interference couples scheduling decisions. By modeling the system through a frame-based abstraction with action-dependent durations, we derived a fundamental lower bound on the achievable weighted sum AoI. We developed two classes of coordinated scheduling policies: a stationary randomized approach and an online Max-Weight policy. Notably, our ratio-based Lyapunov drift analysis provides a novel framework for establishing performance guarantees in non-renewal systems with variable frame lengths.

To ensure scalability for large-scale enterprise networks, we utilized a submodularity assumption to design an efficient approximation for the Max Weight policy. Simulation results confirm that our centralized policies effectively reduce AoI, significantly outperforming uncoordinated distributed baselines. Interesting extensions include settings with mobile users, online estimation of transmission times, and practical implementation on an SDN controller.

\bibliographystyle{IEEEtran}
\bibliography{sample-base}

\appendices
\section{Proof of Lemma~\ref{lemma:J_pi}}\label{proof:lemma_1}
The time horizon can be expressed as:
\begin{equation} \label{eq:4}
    T = \sum_{m = 1}^{D_{i}(T)}(W_{i}[m]+S_{i}[m])+R_{i}.
\end{equation}
The time-average AoI associated with source $i$ can be expressed as:
\begin{equation} \label{eq:5}
\begin{split}
    \frac{1}{T}\sum_{t=1}^{T}A_{i}(t) = &\frac{1}{T}\left[\sum_{m =1}^{D_{i}(T)}\frac{(W_{i}[m]+S_{i}[m])^2}{2}\right.\\
    &+\left.(S_{i}[m-1]+\frac{1}{2})(W_{i}[m]+S_{i}[m])\right]\\
    &+\frac{1}{T}\left[(S_{i}[D_{i}]+\frac{1}{2})R_{i}+R_{i}^2\right].
\end{split}
\end{equation}
Combining (\ref{eq:4}), (\ref{eq:6}), and (\ref{eq:7}) yields
\begin{equation} \label{eq:8}
\begin{split}
    \frac{T}{D_{i}(T)} &= \frac{1}{D_{i}(T)}\left(\sum_{m = 1}^{D_{i}(T)}(W_{i}[m]+S_{i}[m])+R_{i}\right)\\
    &=\frac{1}{D_{i}(T)}\sum_{m = 1}^{D_{i}(T)}W_{i}[m]+\frac{1}{D_{i}(T)}\sum_{m = 1}^{D_{i}(T)}S_{i}[m] \\&+\frac{R_{i}}{D_{i}(T)}\\
    &=\mathbb{\bar{M}}[W_{i}[m]]+\mathbb{\bar{M}}[S_{i}[m]]+\frac{R_{i}}{D_{i}(T)}.
\end{split}
\end{equation}
Combining (\ref{eq:5}) and (\ref{eq:8}) yields
\begin{equation}\label{eq:9}
    \begin{split}
        \frac{1}{T}\sum_{t=1}^{T}A_{i}(t) = &\frac{D_{i}(T)}{T}\mathbb{\bar{M}}\left[\frac{(W_{i}[m]+S_{i}[m])^2}{2}\right.\\
        &\left.+(S_{i}[m-1]+\frac{1}{2})(W_{i}[m]+S_{i}[m])\right]\\
        &+\frac{1}{T}\left[(S_{i}[D_{i_k}]+\frac{1}{2})R_{i}+R_{i}^2\right]\\
        &=\frac{\mathbb{\bar{M}}\left[\frac{(W_{i}[m]+S_{i}[m])^2}{2}\right]}{\mathbb{\bar{M}}[W_{i}[m]]+\mathbb{\bar{M}}[S_{i}[m]]+\frac{R_{i_k}}{D_{i}(T)}}\\
        &+\frac{\mathbb{\bar{M}}[(S_{i}[m-1]+\frac{1}{2})(W_{i}[m]+S_{i}[m])]}{\mathbb{\bar{M}}[W_{i}[m]]+\mathbb{\bar{M}}[S_{i}[m]]+\frac{R_{i}}{D_{i}(T)}}\\
        &+\frac{1}{T}\left[(S_{i}[D_{i}]+\frac{1}{2})R_{i}+R_{i}^2\right].\\
    \end{split}
\end{equation}
Taking the limit of (\ref{eq:9}) as $T\rightarrow\infty$ and assuming $R_{i}<\infty$, we have $D_{i}(T) \rightarrow \infty$, $\frac{R_{i}}{T} \rightarrow 0$,$\frac{R_{i}^2}{T} \rightarrow 0$. Applying the limits to (\ref{eq:9}), we get
\begin{equation}
    \begin{split}
        \lim_{T\rightarrow\infty}\frac{1}{T}\sum_{t=1}^{T}A_{i}(t) =& \lim_{T\rightarrow \infty}\left[\frac{\mathbb{\bar{M}}[(W_{i}[m]+S_{i}[m])^2]}{2\mathbb{\bar{M}}[W_{i}[m]+S_{i}[m]]}\right.\\
        &\left.+\frac{\mathbb{\bar{M}}[S_{i}[m-1](W_{i}[m]+S_{i}[m])]}{\mathbb{\bar{M}}[W_{i}[m]+S_{i}[m]]}+\frac{1}{2}\right].
    \end{split}
\end{equation}
Taking the weighted sum average across all $N$ nodes with respect to weights $w_{i}$ yields (\ref{eq:11}).

\section{Proof of Theorem~\ref{thm:srp_ewsaoi}}\label{proof:thm4.1}
For Stationary Randomized policies, since the scheduling decision is made independent of history information, the waiting time $W_{i}[m]$ and service time $S_{i}[m]$ are independent across packet updates. $W_{i}[m]$ and $S_{i}[m]$ are also independent of each other within the delivery of an update $m$. Taking the expectation of (\ref{eq:11}) and omitting the update index yields the long-term weighted average AoI of any Stationary Randomized policy $R\in\Pi^{sr}$:
\begin{equation}\label{eq:13}
    \begin{split}
        \mathbb{E}[J^R] = & \frac{1}{N}\sum_{i = 1}^{N}w_{i}\left[\frac{\mathbb{E}[W_{i}^2]+2\mathbb{E}[W_{i}]\mathbb{E}[S_{i}]+\mathbb{E}[S_{i}^2]}{2\mathbb{E}[W_{i}]+2\mathbb{E}[S_{i}]}\right.\\
        &\left.+\frac{\mathbb{E}[S_{i}](\mathbb{E}[W_{i}]+\mathbb{E}[S_{i}])}{\mathbb{E}[W_{i}]+\mathbb{E}[S_{i}]}+\frac{1}{2}\right].
    \end{split}
\end{equation}
The expected service time of user $i$ is the expected transmission time for the scheduled set over all possible sets containing user $i$. The set-wide transmission time $\Delta(S)$ is the service time when the set $S$ is selected. The expected service time is
\begin{equation}\label{eq:14}
    \mathbb{E}[S_{i}] = \frac{\sum_{ \{S\in \mathcal{I}: i\in S \}}\Delta(S)\mu_S}{\sum_{ \{S\in \mathcal{I}: i\in S \}}\mu_S}
\end{equation}
Similarly, the second moment of the service time is
\begin{equation}\label{eq:15}
    \mathbb{E}[S_{i}^2] = \frac{\sum_{ \{S\in \mathcal{I}: i\in S \}}\Delta(S)^2\mu_S}{\sum_{ \{S\in \mathcal{I}: i\in S \}}\mu_S}
\end{equation}

After the current update from user $i$ finishes transmission, the wait time $W_{i}$ is 0 if a set S containing user $i$ is selected for the next update. If another set is selected, then the wait time until the next update from user $i$ will be $\Delta(S)+\mathbb{E}[W_{i}], i\notin S.$ Applying the Law of Total Expectation yields
\begin{equation} \label{eq:E_W}
    \begin{split} 
        \mathbb{E}[W_{i}] = &\sum_{ \{S\in \mathcal{I}: i\in S \}}\mathbb{E}[W_{i}|u_{i}=1]\mu_{S} \\
        &+\sum_{ \{S\in \mathcal{I}: i \notin S \}}\mathbb{E}[W_{i}|u_{i}=0]\mu_{S}\\
        &=\sum_{ \{S\in \mathcal{I}: i\in S \}} 0\cdot\mu_{S}\\
        &+\sum_{ \{S\in \mathcal{I}: i \notin S \}}(\Delta(S)+\mathbb{E}[W_{i}])\mu_{S}.\\
    \end{split}
\end{equation}
Rearranging the equation yields
\begin{equation}\label{eq:17}
\begin{split}
        \mathbb{E}[W_{i}] &= \frac{\sum_{ \{S\in \mathcal{I}: i \notin S \}}\Delta(S)\mu_{S}}{\sum_{ \{S\in \mathcal{I}: i\in S \}}\mu_S}\\
    \end{split}
\end{equation}
Similarly, the second moment of the wait time can be written as
\begin{equation}
    \begin{split}
        \mathbb{E}[W_{i}^2] = &\sum_{ \{S\in \mathcal{I}: i\in S \}}\mathbb{E}[W_{i}^2|u_{i}=1]\mu_{S} \\
        &+\sum_{ \{S\in \mathcal{I}: i \notin S \}}\mathbb{E}[W_{i}^2|u_{i}=0]\mu_{S}\\
        &=\sum_{ \{S\in \mathcal{I}: i\in S \}} 0\cdot\mu_{S}\\
        &+\sum_{ \{S\in \mathcal{I}: i \notin S \}}\mathbb{E}[(\Delta(S)+W_{i})^2]\mu_{S}\\
        &= \sum_{ \{S\in \mathcal{I}: i \notin S \}}\left(\Delta(S)^2+\mathbb{E}[W_{i}^2]+2\Delta(S)\mathbb{E}[W_{i}]\right)\mu_{S}.\\
    \end{split}
\end{equation}
Rearranging the equation and substituting \eqref{eq:14}, \eqref{eq:15}, and \eqref{eq:E_W} yields
\begin{equation}\label{eq:19}
    \begin{split}
        \mathbb{E}[W_{i}^2] = &\frac{\sum_{ \{S\in \mathcal{I}: i \notin S \}}\Delta(S)^2\mu_S+\frac{2(\sum_{ \{S\in \mathcal{I}: i \notin S \}}\Delta(S)\mu(S))^2}{\sum_{ \{S\in \mathcal{I}: i \in S \}}\mu(S)}}{\sum_{ \{S\in \mathcal{I}: i\in S \}}\mu_{S}}.
    \end{split}
\end{equation}

Substituting (\ref{eq:14}), (\ref{eq:15}), (\ref{eq:17}) and (\ref{eq:19}) into (\ref{eq:13}) yields
\begin{equation} \label{eq:J_R}
    \begin{split}
        \mathbb{E}[J^R] &= \frac{1}{N}\sum_{i = 1}^{N}w_{i}\Biggl(\frac{\sum_{S\in \mathcal{I}}\Delta(S)^2\mu_{S}}{2\sum_{S\in \mathcal{I}}\Delta(S)\mu_{S}} \\
        &\quad + \frac{\sum_{S\in \mathcal{I}}\Delta(S)\mu_{S}}{\sum_{ \{S\in \mathcal{I}: i\in S \}}\mu_{S}}-\frac{1}{2}\Biggr).
    \end{split}
\end{equation}

\section{Proof of Theorem~\ref{thm:opt_ratio}}\label{proof:thm4.2}
Combining (\ref{eq:8}) and (\ref{eq:throughput_1}), the long-term throughput of node $i$ under the stationary randomized policy can be expressed as 
\begin{equation}
    q_{i}^R = \frac{1}{\mathbb{E}[W_{i}]+\mathbb{E}[S_{i}]}.
\end{equation}
Substituting (\ref{eq:14}) and (\ref{eq:E_W}) into the expression yields
\begin{equation} \label{eq:q_R_mu}
    q_{i}^R = \frac{\sum_{S\in \mathcal{I},i\in S}\mu_{S}}{\sum_{S\in \mathcal{I}}\Delta(S)\mu_{S}}.
\end{equation}
Substituting $q_{i}^R$ into (\ref{eq:J_R_thm}) yields
\begin{equation}\label{eq:J_R_UB_1}
    \begin{split}
        \mathbb{E}[J^R] &= \frac{1}{N}\sum_{i = 1}^{N}w_{i}\left(\frac{\sum_{S\in \mathcal{I}}\Delta(S)^2\mu_{S}}{2\sum_{S\in \mathcal{I}}\Delta(S)\mu_{S}}+\frac{1}{q_{i}^R}-\frac{1}{2}\right).
    \end{split}
\end{equation}
Let $q_{i}^L$ be the long-term throughput associated with the policy that solves the Lower Bound optimization problem in (\ref{eq:LB_opt}). Let $R\in \Pi^{sr}$ denote the stationary randomized policy defined by the scheduling probabilities $\mu_{S}$ such that $\mu_{S}^R$ is a solution to (\ref{eq:q_R_mu}) when $q_{i}^R = q_{i}^L$. Combining (\ref{eq:LB_opt}) and (\ref{eq:J_R_UB_1}) yields
\begin{equation}\label{eq:JR_LB}
    \mathbb{E}[J^R] = L_B\left[2+\frac{\frac{1}{N}\sum_{i = 1}^{N}w_{i}\left(\frac{\sum_{S\in \mathcal{I}}\Delta(S)^2\mu_{S}^R-1}{2+2\sum_{S\in \mathcal{I}}\Delta(S)\mu_{S}^R}+\frac{1}{2}\right)}{L_B}\right].
\end{equation}
Substituting (\ref{eq:JR_LB}) into $\rho^R = \mathbb{E}[J^R]/L_B$ yields
\begin{equation}
    \begin{split}
        &\rho^R = 2+\frac{\Psi}{L_B}\\
    &\text{where }\Psi = \frac{1}{2N}\sum_{i = 1}^{N}w_{i}\left(\frac{\sum_{S\in \mathcal{I}}\Delta(S)^2\mu_{S}^R-1}{1+\sum_{S\in \mathcal{I}}\Delta(S)\mu_{S}^R}+1\right).
    \end{split}
\end{equation}
The first term in $\Psi$ is upper-bounded by the maximum effective transmission time $\Delta_{\text{max}}$ over all possible subsets of transmitting nodes. In the scenario of arbitrarily high interference, where $\Delta_{\text{max}}\rightarrow \infty$, the stationary randomized policy $R$ would assign scheduling probability $\mu_{S}\rightarrow0$ to any pair $S$ with effective transmission time $\Delta_{\text{max}}$ so that $q_{i}^R$ would not become zero and $L_B$ would not grow to infinity. Therefore, $\Psi$ always stays bounded.

Let $R^*\in\Pi^{sr}$ denote the optimal stationary randomized policy that solves (\ref{eq:J_R_opt}). As $\mathbb{E}[J^{R^*}]\leq \mathbb{E}[J^R]$, we obtain the optimality ratio for the optimal stationary randomized policy:
\begin{equation}
    \rho^{R^*} \leq 2+\frac{\Psi}{L_B}.
\end{equation}

\section{Derivation of the Max-Weight Policy}
\label{appendix:mw_derivation}

To derive the Max-Weight policy, we evaluate the drift of the Lyapunov function defined in (\ref{eq:max_weight}). Substituting AoI evolution rules from (\ref{eq:aoi_evolution}) into the expression for $L(t_{b+1})$, we obtain:
\begin{align}
    L(t_{b+1}) &= \sum_{i \in S(t_b)} w_i \Delta^2 \bigl( S(t_b) \bigr) \nonumber \\
    &\quad + \sum_{i \notin S(t_b)} w_i \Bigl( A_i(t_b) + \Delta \bigl( S(t_b) \bigr) \Bigr)^2 \nonumber \\
    &= \sum_{i \in S(t_b)} w_i \Delta^2 \bigl( S(t_b) \bigr) \nonumber \\
    &\quad + \sum_{i \notin S(t_b)} w_i \biggl( A_i^2(t_b) + 2A_i(t_b)\Delta \bigl( S(t_b) \bigr) \nonumber \\
    &\qquad\qquad\qquad\quad + \Delta^2 \bigl( S(t_b) \bigr) \biggr).
\end{align}
The Lyapunov drift $\Delta L(t_b) = L(t_{b+1}) - L(t_b)$ is then simplified by subtracting the current state $L(t_b) = \sum_{i=1}^N w_i A_i^2(t_b)$:
\begin{align}
    \Delta L(t_b) &= \sum_{i \in S(t_b)} w_i \Delta^2 \Bigl( S(t_b) \Bigr) + \sum_{i \notin S(t_b)} w_i \Delta^2 \Bigl( S(t_b) \Bigr) \nonumber \\
    &\quad + \sum_{i \notin S(t_b)} 2w_i A_i(t_b) \Delta \Bigl( S(t_b) \Bigr) \nonumber \\
    &\quad + \sum_{i \notin S(t_b)} w_i A_i^2(t_b) - \sum_{i=1}^N w_i A_i^2(t_b).
\end{align}
Now,

\begin{align}
    \Delta L(t_b) &= \sum_{i=1}^N w_i \Delta^2 \Bigl( S(t_b) \Bigr) + 2\Delta \Bigl( S(t_b) \Bigr) \sum_{i \notin S(t_b)} w_i A_i(t_b) \nonumber \\
    &\quad - \sum_{i \in S(t_b)} w_i A_i^2(t_b). \label{eq:drift_step}
\end{align}
The policy $\pi^{MW}(t_b)$ in (\ref{eq:max_weight}) minimizes the normalized conditional drift ratio $\mathbb{E}[\Delta L(t_b) \mid \mathbf{A}(t_b)] / \Delta(S(t_b))$. Dividing (\ref{eq:drift_step}) by $\Delta(S(t_b))$ and substituting the identity $\sum_{i \notin S(t_b)} w_i A_i(t_b) = \sum_{i=1}^N w_i A_i(t_b) - \sum_{i \in S(t_b)} w_i A_i(t_b)$, we can ignore the constant term $2\sum_{i=1}^N w_i A_i(t_b)$ for the purpose of optimization. This yields the following minimization problem:
\begin{equation}
    \begin{split}
        \min_{S(t_b) \in \mathcal{I}} \Biggl[ &\sum_{i=1}^N w_i\Delta(S(t_b)) - 2 \sum_{i \in S(t_b)} w_i A_i(t_b) \\
        &- \frac{\sum_{i \in S(t_b)} w_i A_i^2(t_b)}{\Delta(S(t_b))} \Biggr].
    \end{split}
\end{equation}
Finally, negating the objective to convert it to a maximization results in the policy form in (\ref{eq:max_weight}).
\qed

\section{Proof of Theorem \ref{thm:mw_optimality_ratio}}
\label{proof:thm_mwopt}

The network state is tracked across scheduling frames indexed by $t_b$. The age of node $i$ at the start of frame $t_b + 1$ is an evolution of its age in the preceding frame $t_b$ and the duration of the current scheduling interval:
\begin{align}
    A_{i}(t_b+1) &= A_{i}(t_b) + \Delta\big(S(t_b)\big) - A_{i}(t_b)u_{i}\big(S(t_b)\big), \label{eq:evolution} \\
    u_{i}\big(S(t_b)\big) &= \begin{cases} 
        1, & \text{if } i \in S(t_b), \\ 
        0, & \text{else}. 
    \end{cases} \label{eq:indicator}
\end{align}
The total accumulated cost for a specific user $i$ over the duration of a single frame is defined as the sum of their age at each time step within that frame interval:
\begin{align}
    & A_{i}(t_b) + A_{i}(t_b) + 1 + \dots + \Big( A_{i}(t_b) + \Delta\big(S(t_b)\big) - 1 \Big). \nonumber \\
    \Rightarrow & A_{i}(t_b)\Delta\big(S(t_b)\big) + \frac{\Big( \Delta\big(S(t_b)\big) - 1 \Big)\Delta\big(S(t_b)\big)}{2}. \label{eq:one_frame_cost}
\end{align}
By aggregating these costs over a horizon of $B$ frames and normalizing by the total elapsed time, we derive the average network cost expression:
\begin{equation}
\begin{split}
    \text{Avg cost} &= \frac{1}{\sum_{t_b=1}^{B} \Delta\big(S(t_b)\big)} \Bigg( \sum_{b=1}^{B} \sum_{i=1}^{N} w_{i} A_{i}(t_b)\Delta\big(S(t_b)\big) \\
    &\quad + \sum_{i=1}^{N} w_{i} \bigg( \frac{\Delta^{2}\big(S(t_b)\big) - \Delta\big(S(t_b)\big)}{2} \bigg) \Bigg).
\end{split}
\label{eq:avg_network_cost}
\end{equation}
Using (\ref{eq:avg_network_cost}), taking expectation and letting $B \rightarrow \infty $, we get
\begin{equation}
\begin{split}
    J^{MW} &= \lim_{B \to \infty} \frac{1}{N \sum_{b=1}^{B} \mathbb{E} \big[ \Delta\big(S(t_b)\big) \big]} \\
    &\quad \times \sum_{b=1}^{B} \mathbb{E} \Bigg[ \sum_{i=1}^{N} w_i \bigg( A_i(t_b) \Delta\big(S(t_b)\big) \\
    &\qquad\qquad\qquad\quad + \frac{\Delta^2\big(S(t_b)\big) - \Delta\big(S(t_b)\big)}{2} \bigg) \Bigg],
\end{split}
\label{eq:refined_aoi}
\end{equation}
where $J^{MW}$ is the long-term weighted average AoI of the Max-Weight policy.\\
\noindent We define $\Psi_{MW}$ as the empirical intra-frame aging component under the Max-Weight policy. Utilizing the sample mean operator $\bar{\mathbb{M}}[\cdot]$, we express this as:

\begin{align}
    \Psi_{MW} &= \frac{\bar{\mathbb{M}} \Bigg[ \sum_{i=1}^{N} w_i \bigg( \frac{\Delta^2\big(S(t_b)\big) - \Delta\big(S(t_b)\big)}{2} \bigg) \Bigg]}{N \bar{\mathbb{M}}\big[\Delta\big(S(t_b)\big)\big]} \nonumber \\
    &= \frac{\bigg(\sum_{i=1}^{N} w_i\bigg) \cdot \bar{\mathbb{M}}\Bigg[\frac{\Delta^2\big(S(t_b)\big)-\Delta\big(S(t_b)\big)}{2}\Bigg]}{N\bar{\mathbb{M}}\big[\Delta\big(S(t_b)\big)\big]} \nonumber \\
    &= \frac{\sum_{i=1}^N w_i}{2N} \cdot \frac{\bar{\mathbb{M}}\big[\Delta^2\big(S(t_b)\big)\big] - \bar{\mathbb{M}}\big[\Delta\big(S(t_b)\big)\big]}{\bar{\mathbb{M}}\big[\Delta\big(S(t_b)\big)\big]} \nonumber \\
    &= \frac{\sum_{i=1}^N w_i}{2N} \Bigg( \frac{\bar{\mathbb{M}}\big[\Delta^2\big(S(t_b)\big)\big]}{\bar{\mathbb{M}}\big[\Delta\big(S(t_b)\big)\big]} - 1 \Bigg),
    \label{eq:psi_mw_refined}
\end{align}
where $\bar{\mathbb{M}}[X] = \frac{1}{B} \sum_{b=1}^{B} X_b$ denotes the empirical average over a horizon of $B$ scheduling frames.
From (\ref{eq:refined_aoi}) and (\ref{eq:psi_mw_refined}), we get

\begin{equation}
    J^{MW} = L_1 + \Psi_{MW},
\end{equation}
where
\begin{equation}
\begin{split}
    L_1 &= \lim_{B \to \infty} \frac{1}{N \sum_{b=1}^{B} \mathbb{E} \big[ \Delta\big(S(t_b)\big) \big]} \\
    &\quad \times \sum_{b=1}^{B} \mathbb{E} \Bigg[ \sum_{i=1}^{N} w_i A_i(t_b) \Delta\big(S(t_b)\big) \Bigg].
\end{split}
\label{eq:mw_l1}
\end{equation}
From (\ref{eq:J_R_thm}), we get 

\begin{equation}
    \mathbb{E}[J^{R}] = R_1 + \Psi_{SR}, \label{eq:srp_aoi_final}
\end{equation}
where $R_1 = \frac{1}{N} \sum_{i=1}^N \frac{w_i \bar{\Delta}_{SR}}{\gamma_i}$ and $\Psi_{SR} = \frac{1}{N} \sum_{i=1}^{N} w_{i} \left( \frac{\sum_{S} \Delta^{2}(S)\mu^*_{S}}{2 \sum_{S} \Delta(S)\mu^*_{S}} - \frac{1}{2} \right)$.

The network state is analyzed using the quadratic Lyapunov function:
\begin{equation}
    L(t_b) = \sum_{i=1}^N w_i A_i^2(t_b).
\end{equation}
The Max-Weight policy, $\pi^{MW}(t_b)$, minimizes the normalized drift over each frame:
\begin{equation}
    \pi^{MW}(t_b) = \arg \min_{S} \frac{L(t_b+1) - L(t_b)}{\Delta(S)}.
\end{equation}
For any optimal stationary randomized policy $\pi^*$ \cite{neely}:
\begin{equation}
    \frac{\mathbb{E}[a(\pi^*)]}{\mathbb{E}[b(\pi^*)]} \geq \inf_{\pi \in \mathcal{P}} \left[ \frac{\mathbb{E}[a(\pi)]}{\mathbb{E}[b(\pi)]} \right],
\end{equation}
where $\pi^*$ is the optimal SRP and the MW policy provides the bound. Thus, we have:
\begin{equation}
\begin{split}
    &\frac{\mathbb{E} [ L^{MW}(t_b+1) - L^{MW}(t_b) \mid \bar{A}(t_b) ]}{\mathbb{E} [ \Delta^{MW}(t_b) \mid \bar{A}(t_b) ]} \\
    &\quad \leq \frac{\mathbb{E}[L^{SR}(t_b+1) - L^{SR}(t_b) \mid \bar{A}(t_b)]}{\mathbb{E}[\Delta^{SR}(t_b) \mid \bar{A}(t_b)]}.
\end{split}
\end{equation}
Substituting the age evolution from \eqref{eq:aoi_evolution}, the Lyapunov function at $t_b+1$ is:
\begin{align}
    L^{SR}(t_b+1) &= \sum_{i=1}^N w_i \Bigl( A_i(t_b) + \Delta(S(t_b)) \nonumber \\
    &\qquad - A_i(t_b) u_i(S(t_b)) \Bigr)^2 \label{eq:lyapunov_evol} \\
    &= \sum_{i=1}^N w_i \Bigl( A_i^2(t_b) + \Delta^2(S(t_b)) \nonumber \\
    &\qquad - A_i^2(t_b) u_i(S(t_b)) \nonumber \\
    &\qquad + 2 A_i(t_b) \Delta(S(t_b)) \nonumber \\
    &\qquad - 2 A_i(t_b) \Delta(S(t_b)) u_i(S(t_b)) \Bigr). \label{eq:lyapunov_exp}
\end{align}
Now, evaluating the one-slot drift for the stationary randomized policy:
\begin{equation} \label{eq:lyapunov_drift}
    \begin{split}
        &L^{SR}(t_b+1) - L^{SR}(t_b) \\
        &= \sum_{i=1}^N w_i \Bigl( \Delta^2(S(t_b)) - A_i^2(t_b) u_i(S(t_b)) \\
        &\qquad + 2 A_i(t_b) \Delta(S(t_b)) \\
        &\qquad - 2 A_i(t_b) \Delta(S(t_b)) u_i(S(t_b)) \Bigr).
    \end{split}
\end{equation}
Taking the conditional expectation:
\begin{equation}
    \begin{split}
        &\mathbb{E}[L^{SR}(t_b + 1) - L^{SR}(t_b) \mid \bar{A}(t_b)] \\
        &= \sum_{i=1}^N w_i \Bigl( \overline{(\Delta^{SR})^2} - A_i^2(t_b) \mathbb{E}[u_i^{SR}] \\
        &\qquad + 2 A_i(t_b) \mathbb{E}[\Delta^{SR}] - 2 A_i(t_b) \mathbb{E}[(\Delta u_i)^{SR}] \Bigr).
    \end{split}
\end{equation}
\noindent Let 
\begin{align}
    \mathbb{E}[\Delta^{SR}] &= \sum_S \Delta(S) \mu^*_S = \bar{\Delta}^{SR}, \\
    \gamma_i &= \sum_{S: i \in S} \mu^*_S = \mathbb{E}[u_i^{SR}].
\end{align}

\noindent Now, observing that $\Delta \geq 1$, it follows that $\mathbb{E}[(\Delta u_i)^{SR}] \geq \mathbb{E}[(u_i)^{SR}] = \gamma_i$. Substituting these into the drift expression provides the following upper bound:
\begin{align}
    \mathbb{E}[L^{SR}(t_b + 1) &- L^{SR}(t_b) \mid \bar{A}(t_b)] \nonumber \\
    &\le \sum_{i=1}^N w_i \Big( \overline{(\Delta^{SR})^2} + 2 A_i(t_b) \bar{\Delta}^{SR} \nonumber \\
    &\qquad\qquad\quad - A_i^2(t_b) \gamma_i - 2 A_i(t_b) \gamma_i \Big) \label{eq:drift_bound_66} \\
    &= \sum_{i=1}^N w_i \Big( \overline{(\Delta^{SR})^2} + 2 A_i(t_b) \bar{\Delta}^{SR} \nonumber \\
    &\qquad\qquad\quad - A_i(t_b) \big(A_i(t_b) + 2\big) \gamma_i \Big) \label{eq:drift_bound_67}.
\end{align}
By completing the square for $A_i(t_b)$ and normalizing by the expected frame duration, we obtain the full expression for the SRP drift bound:
\begin{equation}
\begin{aligned}
    &\frac{\mathbb{E}[L^{SR}(t_b+1) - L^{SR}(t_b) \mid \bar{A}(t_b)]}{\bar{\Delta}_{SR}} \le \frac{1}{\bar{\Delta}_{SR}} \Bigg[ \\
    &\sum_{i=1}^N -w_i \gamma_i \left( A_i(t_b) - \frac{\bar{\Delta}_{SR}}{\gamma_i} + 1 \right)^2 + G \Bigg],
\end{aligned}
\end{equation}
where the constant scaling term $G$ is defined as:
\begin{equation}
    G = \sum_{i=1}^N w_i \left( \gamma_i \left( \frac{\bar{\Delta}_{SR}}{\gamma_i} - 1 \right)^2 + \overline{(\Delta^{SR})^2} \right).
\end{equation}
Applying the Cauchy-Schwarz inequality to the quadratic age term yields:
\begin{equation}
\begin{aligned}
    &\left( \sum_{i=1}^{N} w_i \gamma_i \left(A_i(t_b) - \frac{\overline{\Delta_{SR}}}{\gamma_i} + 1\right)^2 \right) \left( \sum_{i=1}^{N} w_i \frac{\overline{\Delta_{SR}}}{\gamma_i} \right) \\
    &\geq \left( \sum_{i=1}^{N} w_i \left| A_i(t_b) - \frac{\overline{\Delta_{SR}}}{\gamma_i} + 1 \right| \right)^2 \label{eq:cauchy_schwarz}.
\end{aligned}
\end{equation}

\noindent Taking the conditional expectation with respect to the state $\bar{A}(t_b)$, let $\mathbb{E}[\Delta^{SR}] = \bar{\Delta}^{SR}$ and $\gamma_i = \mathbb{E}[u_i^{SR}]$. Using the property $\mathbb{E}[(\Delta u_i)^{SR}] \geq \gamma_i$, we obtain the upper bound:
\begin{align}
    \mathbb{E}[L^{SR}(t_b + 1) &- L^{SR}(t_b) \mid \bar{A}(t_b)] \nonumber \\
    &\le \sum_{i=1}^N w_i \Big( \overline{(\Delta^{SR})^2} + 2 A_i(t_b) \bar{\Delta}^{SR} \nonumber \\
    &\qquad\qquad\quad - A_i(t_b) \big(A_i(t_b) + 2\big) \gamma_i \Big).
\end{align}

\noindent Completing the square and normalizing by the expected frame duration, the SRP drift bound is:
\begin{equation}
\begin{aligned}
    &\frac{\mathbb{E}[L^{SR}(t_b+1) - L^{SR}(t_b) \mid \bar{A}(t_b)]}{\bar{\Delta}_{SR}} \le \\
    &\frac{1}{\bar{\Delta}_{SR}} \left[ \sum_{i=1}^N -w_i \gamma_i \left( A_i(t_b) - \frac{\bar{\Delta}_{SR}}{\gamma_i} + 1 \right)^2 + G \right],
\end{aligned}
\end{equation}
where $G = \sum_{i=1}^N w_i \left( \gamma_i \left( \frac{\bar{\Delta}_{SR}}{\gamma_i} - 1 \right)^2 + \overline{(\Delta^{SR})^2} \right)$.

The Max-Weight scheduler is designed to minimize the normalized conditional drift at each frame $t_b$. Consequently, the drift under the MW policy is bounded by the normalized drift of the optimal stationary randomized policy (SRP) evaluated at the current state:
\begin{align}
    \label{eq:drift_bound}
    &\frac{\mathbb{E} \Bigl[ L^{MW}(t_{b+1}) - L^{MW}(t_b) \bigm| \bar{A}(t_b) \Bigr]}{\mathbb{E} \Bigl[ \Delta(S_{MW}(t_b)) \bigm| \bar{A}(t_b) \Bigr]} \nonumber \\
    &\quad \le \frac{\mathbb{E} \Bigl[ L^{SR}(t_{b+1}) - L^{SR}(t_b) \bigm| \bar{A}(t_b) \Bigr]}{\mathbb{E} \Bigl[ \Delta(S_{SR}(t_b)) \bigm| \bar{A}(t_b) \Bigr]}.
\end{align}

\noindent Substituting the expected frame duration for the SRP, $\mathbb{E} [ \Delta(S_{SR}(t_b)) \mid \bar{A}(t_b) ] = \bar{\Delta}_{SR}$, and utilizing the drift bound derived earlier through square completion:
\begin{equation}
\begin{aligned}
    &\frac{\mathbb{E} [ L(t_b+1) - L(t_b) \mid \bar{A}(t_b) ]}{\mathbb{E} [ \Delta(S_{MW}(t_b)) \mid \bar{A}(t_b) ]} \le \\
    &\frac{1}{\bar{\Delta}_{SR}} \left[ \sum_{i=1}^N -w_i \gamma_i \left( A_i(t_b) - \frac{\bar{\Delta}_{SR}}{\gamma_i} + 1 \right)^2 + G \right].
\end{aligned}
\end{equation}

\noindent Defining the one-slot conditional Lyapunov drift as $\Delta(\bar{A}(t_b)) \triangleq \mathbb{E} [ L(t_b+1) - L(t_b) \mid \bar{A}(t_b) ]$ and multiplying through by the MW expected frame duration:
\begin{equation}
\begin{aligned}
    \Delta(\bar{A}(t_b)) \le \frac{\mathbb{E} [ \Delta(S_{MW}(t_b)) \mid \bar{A}(t_b) ]}{\bar{\Delta}_{SR}} \times \\ 
    \Bigg[ \sum_{i=1}^N -w_i \gamma_i \left( A_i(t_b) - \frac{\bar{\Delta}_{SR}}{\gamma_i} + 1 \right)^2 + G \Bigg]
\end{aligned}
\end{equation}

\noindent Finally, applying the Cauchy-Schwarz inequality, defining $C_1 = \sum_{i=1}^N \frac{w_i \bar{\Delta}_{SR}}{\gamma_i}$, and rearranging:
\begin{equation}
\begin{aligned}
    &\frac{\mathbb{E} [ \Delta(S_{MW}(t_b)) \mid \bar{A}(t_b) ]}{\bar{\Delta}_{SR}} \left( \sum_{i=1}^{N} w_i \left| A_i(t_b) - \frac{\bar{\Delta}_{SR}}{\gamma_i} + 1 \right| \right)^2 \\ 
    &\leq -C_1 \Delta(\bar{A}(t_b)) + \frac{C_1 G}{\bar{\Delta}_{SR}} \mathbb{E} [ \Delta(S_{MW}(t_b)) \mid \bar{A}(t_b) ].
\end{aligned}
\end{equation}

\noindent Summing over the time horizon $b = 1, \dots, B$ yields:
\begin{equation}
\begin{aligned}
    \sum_{b=1}^{B} &\frac{\mathbb{E} [ \Delta(S_{MW}(t_b)) \mid \bar{A}(t_b) ]}{\bar{\Delta}_{SR}} \left( \sum_{i=1}^{N} w_i \left| A_i(t_b) - \frac{\bar{\Delta}_{SR}}{\gamma_i} + 1 \right| \right)^2 \\
    &\le -C_1 \mathbb{E}[L(t_{B+1}) - L(t_{1})] + \frac{C_1 G}{\bar{\Delta}_{SR}} \sum_{b=1}^{B} \mathbb{E}[\Delta(S_{MW}(t_b))].
\end{aligned}
\end{equation}

\noindent Let $T = \sum_{b=1}^{B} \Delta(S_{MW}(t_b))$ be the total elapsed time. Dividing both sides by the expected total time $\mathbb{E}[T]$:
\begin{equation}
\begin{aligned}
    &\frac{1}{\mathbb{E}[T]} \sum_{b=1}^B \mathbb{E} \left[ \frac{\Delta(S_{MW}(t_b))}{\bar{\Delta}_{SR}} \left( \sum_{i=1}^{N} w_i \left| A_i(t_b) - \frac{\bar{\Delta}_{SR}}{\gamma_i} + 1 \right| \right)^2 \right] \\
    &\le \frac{C_1 \mathbb{E}[L(t_1) - L(t_{B+1})]}{\mathbb{E}[T]} + \frac{C_1 G}{\bar{\Delta}_{SR}}.
\end{aligned}
\end{equation}

\noindent As $B \to \infty$, the total expected duration $\mathbb{E}[T] \to \infty$. For any stable scheduling policy, the normalized telescoping drift vanishes:
\begin{equation}
    \lim_{T \to \infty} \frac{\mathbb{E}[L(t_{B+1}) - L(t_1)]}{\mathbb{E}[T]} = 0.
\end{equation}

\noindent Applying Jensen's inequality to the normalized sum of squared deviations and taking the square root, we bound the time-averaged absolute deviation:
\begin{equation}
    \begin{split}
        \lim_{B \to \infty} &\frac{1}{\mathbb{E}[T]} \mathbb{E} \Biggl[ \sum_{t_b=1}^B \Delta(S_{MW}(t_b)) \\
        &\quad \times \sum_{i=1}^N w_i \left| A_i(t_b) - \frac{\bar{\Delta}_{SR}}{\gamma_i} + 1 \right| \Biggr] \le \sqrt{C_1 G}.
    \end{split}
\end{equation}

\noindent To recover the weighted Age of Information, we decompose the age term using the triangle inequality as follows:
\begin{equation}
    \begin{split}
        \sum_{i=1}^N w_i A_i(t_b) &= \sum_{i=1}^N w_i \biggl( A_i(t_b) - \frac{\bar{\Delta}_{SR}}{\gamma_i} + 1 \biggr) \\
        &\quad + \sum_{i=1}^N w_i \biggl( \frac{\bar{\Delta}_{SR}}{\gamma_i} - 1 \biggr) \\
        &\le \sum_{i=1}^N w_i \biggl| A_i(t_b) - \frac{\bar{\Delta}_{SR}}{\gamma_i} + 1 \biggr| \\
        &\quad + \sum_{i=1}^N \frac{w_i \bar{\Delta}_{SR}}{\gamma_i} - \sum_{i=1}^N w_i.
    \end{split}
\end{equation}

\noindent By aggregating these costs across all frames, normalizing by the total duration, and taking the limit, the final performance bound of the Max-Weight policy is expressed as:
\begin{equation} 
    \lim_{B \to \infty} \frac{\sum_{b=1}^{B} \mathbb{E} \left[ \Delta(S_{MW}(b)) \sum_{i=1}^{N} w_i A_i(b) \right]}{\sum_{b=1}^{B} \mathbb{E} [\Delta(S_{MW}(b))]} \leq C_1 + \sqrt{C_1 G}. \label{eq:final_bound}
\end{equation}

\noindent Using the approximations $\frac{\bar{\Delta}_{SR}}{\gamma_i} - 1 \le \frac{\bar{\Delta}_{SR}}{\gamma_i}$ and $\bar{\Delta}_{SR}^2 \le \frac{\bar{\Delta}_{SR}^2}{\gamma_i}$:
\begin{equation}
    G \le \sum_{i=1}^N w_i \left( \gamma_i \left( \frac{\bar{\Delta}_{SR}}{\gamma_i} \right)^2 + \frac{\bar{\Delta}_{SR}^2}{\gamma_i} \right) = \sum_{i=1}^N 2 w_i \frac{\bar{\Delta}_{SR}^2}{\gamma_i}.
\end{equation}

\noindent Substituting $C_1 = \sum_{i=1}^N \frac{w_i \bar{\Delta}_{SR}}{\gamma_i}$ and the bound for $G$ into the square root term:
\begin{equation}
\begin{aligned}
    \sqrt{C_1 G} &\le \sqrt{ \left( \sum_{i=1}^N \frac{w_i \bar{\Delta}_{SR}}{\gamma_i} \right) \left( 2 \bar{\Delta}_{SR} \sum_{i=1}^N \frac{w_i \bar{\Delta}_{SR}}{\gamma_i} \right) } \\
    &= \sqrt{2 \bar{\Delta}_{SR}} \times \bigg( \sum_{i=1}^N \frac{w_i \bar{\Delta}_{SR}}{\gamma_i} \bigg).
    \label{eq:intermediate_bound}
\end{aligned}
\end{equation}
Now, combining (\ref{eq:final_bound}) and (\ref{eq:intermediate_bound}), we have 

\begin{equation}
    L_1 \leq \sqrt{2 \bar{\Delta}_{SR}} \cdot R_1.
    \label{eq:intermediate_bound_2}
\end{equation}

\noindent We need the optimality ratio,
\begin{equation}
\frac{J^{MW}}{\mathbb{E}[J^{R}]} = \frac{L_1 + \Psi_{MW}}{R_1 + \Psi_{SR}} .  
\end{equation}
Note that: 
\begin{equation}
    \frac{L_1 + \Psi_{MW}}{R_1 + \Psi_{SR}} = \frac{L_1}{R_1 + \Psi_{SR}} + \frac{\Psi_{MW}}{R1 + \Psi_{SR}} \leq \frac{L_1}{R_1} + \frac{\Psi_{MW}}{L_B},
    \label{eq:aoi_ratio}
\end{equation}
since we have $R_1 \leq R_1 + \Psi_{SR}$ and $L_B \leq \mathbb{E}[J^{R}] = R_1 + \Psi_{SR}$ by definition.
Using (\ref{eq:intermediate_bound}) and simplifying,
\begin{equation}
    \frac{J^{MW}}{\mathbb{E}[J^{R}]} \leq \sqrt{2 \bar{\Delta}_{SR}} + \frac{\Psi_{MW}}{L_B}.
\end{equation}
Using (\ref{eq:opt_sr}) and simplifying, we arrive at the final optimality ratio:

\begin{equation}
    \rho^{MW} \le  \bigg(2 + \frac{\Psi_{LB}}{L_B}\bigg)\bigg(\sqrt{2 \bar{\Delta}_{SR}} + \frac{\Psi_{MW}}{L_B}\bigg).
\end{equation}

\section{Proof of Corollary \ref{corollary: approx_optimality}}
\label{proof: approx_algo}

The proof follows the same steps as proof of Theorem \ref{thm:mw_optimality_ratio} in Appendix \ref{proof:thm_mwopt}. We start with the normalized conditional drift inequality in (\ref{eq:drift_bound}). From Lemma \ref{lemma:frame_based_optimality}, the approximation algorithm identifies a schedule $\pi^{AMW}$ such that 
\begin{equation}
    \begin{split}
        &\frac{L^{AMW}(t_{b+1}) - L^{AMW}(t_{b})}{\Delta(\pi^{AMW}(t_b))} \\
        &\quad \leq (4+\epsilon) \frac{L^{MW}(t_{b+1}) - L^{MW}(t_{b})}{\Delta(\pi^{MW}(t_b))}
    \end{split}
\end{equation}
where $\pi^{MW}$ is the optimal solution to \eqref{eqn:mw_policy}.

By substituting this approximation guarantee into the drift bound, the inequality in (76) is scaled by a factor of $(4+\epsilon)$, i.e.,
\begin{align}
    \label{eqn:drift_bound_amw}
    &\frac{\mathbb{E} \Bigl[ L^{AMW}(t_{b+1}) - L^{AMW}(t_b) \bigm| \bar{A}(t_b) \Bigr]}{\mathbb{E} \Bigl[ \Delta(S_{AMW}(t_b)) \bigm| \bar{A}(t_b) \Bigr]} \nonumber \\
    &\le (4+\epsilon) \cdot \frac{\mathbb{E} \Bigl[ L^{SR}(t_{b+1}) - L^{SR}(t_b) \bigm| \bar{A}(t_b) \Bigr]}{\mathbb{E} \Bigl[ \Delta(S_{SR}(t_b)) \bigm| \bar{A}(t_b) \Bigr]}.
\end{align}

Following the steps in Appendix \ref{proof:thm_mwopt}, we arrive at (\ref{eq:intermediate_bound_2}) with a scaled factor:
\begin{equation}
    L_1 \leq (4+\epsilon)\cdot\sqrt{2 \bar{\Delta}_{SR}} \cdot R_1.
    \label{eq:intermediate_bound_amw}
\end{equation}

Again, following the steps, (\ref{eq:aoi_ratio}) becomes
\begin{equation}
    \frac{J^{AMW}}{\mathbb{E}[J^{R}]} \leq (4+\epsilon)\cdot\sqrt{2 \bar{\Delta}_{SR}} + \frac{\Psi_{MW}}{L_B},
\end{equation}
where $J^{AMW}$ is the long-term weighted average AoI of the Approximate Max-Weight (AMW) policy. 
Using (\ref{eq:opt_sr}) and simplifying, we obtain the optimality ratio in (\ref{eq:corollary}).
\qed
\end{document}